\documentclass[journal]{IEEEtran}

\usepackage{cite}
\usepackage{amsmath,amssymb,amsfonts}
\usepackage{algorithmic}
\usepackage{graphicx}
\usepackage{textcomp}
\usepackage{xcolor}	
\usepackage[caption=false,font=footnotesize]{subfig}
\usepackage{stfloats}
\usepackage{epstopdf}
\usepackage{acronym}
\usepackage[bookmarksopen=true]{hyperref}
\usepackage{mathrsfs}
\usepackage{float}
\usepackage{placeins}
\usepackage{epstopdf}
\usepackage{amsthm}
\usepackage{commath}
\usepackage{booktabs}
\usepackage{flushend} 
\usepackage{lettrine}

\def\BibTeX{{\rm B\kern-.05em{\sc i\kern-.025em b}\kern-.08em
		T\kern-.1667em\lower.7ex\hbox{E}\kern-.125emX}}

\hypersetup{
	colorlinks=true,    
	allcolors=blue,
	urlcolor=black     
}

\newtheoremstyle{customdef} 
{}                          
{}                          
{\itshape}                         
{}                         
{\bfseries}                 
{.}                         
{ }                         
{}  

\theoremstyle{customdef}
\newtheorem{definition}{Definition}   

\newtheoremstyle{customproposition} 
{}                          
{}                          
{\itshape}                         
{}                         
{\bfseries}                 
{.}                         
{ }                         
{}  

\theoremstyle{customproposition}
\newtheorem{proposition}{Proposition}   

\newtheoremstyle{customremark} 
{}                          
{}                          
{}                         
{}                         
{\bfseries}                 
{.}                         
{ }                         
{}                          

\theoremstyle{customremark}
\newtheorem{remark}{Remark}  

\newtheoremstyle{customexample} 
{}                          
{}                          
{}                         
{}                         
{\bfseries}                 
{.}                         
{ }                         
{}                          

\theoremstyle{customexample}
\newtheorem{example}{Example}

\newcommand{\trans}{^{\mathrm{T}}}
\newcommand{\herm}{^{\mathrm{H}}}
\newcommand{\vecsym}[1]{\boldsymbol{\rm{#1}}}

\newcommand{\raisedZero}{^{(0)}}
\newcommand{\raisedOne}{^{(1)}}

\DeclareMathOperator*{\argmax}{arg\,max}

\def\e{\mathrm e}

\def\R{\mathbb{R}}
\def\C{\mathbb{C}}
\def\N{\mathbb{N}}

\def\binaryField{\mathbb{F}_2}

\def\E{\mathbb{E}}
\def\vector{\vecsym{v}}

\def\complexnormal{\mathcal{CN}}

\def\PAPR{\text{PAPR}}
\def\MSE{\text{MSE}}
\def\Arg{\mathrm{Arg}}

\def\bMessage{\vecsym{b}}
\def\bit{b}
\def\estimatedBit{\hat{b}}
\def\zero{\alpha}

\def\zeroIndex{k}
\def\numZeros{K}
\def\zeroMag{\rho}
\def\zeroPhase{\psi}
\def\radius{R}
\def\baseAngle{\theta_\numZeros}

\def\bZero{\boldsymbol{\zero}}
\newcommand{\crPair}[1]{\mathscr{Z}_{#1}}

\def\polyCodebook{\mathscr{C}^\numZeros}

\def\X{X}
\def\H{H}
\def\W{W}
\def\Y{Y}
\def\Ymod{\tilde{\Y}}
\def\Xhat{\hat{\X}}

\def\txSeq{\vecsym{x}}
\def\rxSeq{\vecsym{y}}

\def\noiseSeq{\vecsym{w}}

\def\channelZero{\beta}
\def\noiseZero{\gamma}

\def\A{A}
\def\Ahuffman{\A_\mathrm{H}}
\def\Ajutted{\A_{\mathrm{J}}}
\def\Xhuffman{\X_\mathrm{H}}
\def\Xjutted{\X_{\mathrm{J}}}
\def\etaH{\eta_{_\mathrm{H}}}
\def\etaJ{\eta_{_\mathrm{J}}}
\def\autoIndex{\ell}

\def\asymFactor{\zeta}

\def\pseudoLLR{\text{PLLR}}

\def\dataSymbol{d}

\def\numSubcarriers{S}
\def\subcarrierIndex{\ell}
\def\numOFDMsymbols{M}
\def\ofdmSymbolIndex{m}

\def\idftSize{N}
\def\idftIndex{n}
\def\idftIndexHat{\hat{\idftIndex}}

\def\cpSize{N_\mathrm{cp}}

\def\txSignal{s}
\def\rxSignal{r}

\def\rxSignalMod{\tilde{r}}

\def\sampleRate{f_\mathrm{s}}

\def\symbolDuration{T_\mathrm{s}}
\def\cpDuration{T_\mathrm{cp}}
\def\subcarrierSpacing{F_\mathrm{s}}

\def\generalIndex{\tau}

\def\numBits{B}
\def\numPayloadBits{B_\text{tot}}
\def\numPolys{P}
\def\polyIndex{p}
\def\numPad{B_\mathrm{pad}}

\def\channelLength{L}
\def\channelIndex{l}
\def\channelIR{\vecsym{h}}
\def\channelGain{g}
\def\channelCoefficient{h}
\def\effLength{L_\mathrm{e}}
\def\effIndex{l_\mathrm{e}}
\def\totLength{L_\mathrm{t}}
\def\totIndex{l_\mathrm{t}}
\def\delay{\tau}
\def\noiseElem{w}

\def\ebno{E_\mathrm{b}/N_0}

\def\subcarrierGain{H}

\def\cfo{\Delta f}
\def\cfoEst{\widehat{\cfo}}

\def\timingOffset{\Delta t}
\def\residualOffset{\delta}
\def\timingOffsetEst{\widehat{\timingOffset}}
\def\residualOffsetEst{\hat{\residualOffset}}

\def\stepBackSamples{N_\text{back}}
\def\po{\phi}
\def\poEst{\hat{\po}}

\def\overFactor{Q}

\def\fracCFO{\mu}
\def\intCFO{u}

\def\T{T}

\def\intVariable{\varphi}

\def\Hspecial{\mathcal{H}}
\def\capacity{C}
\def\estCapacity{\tilde{C}}
\def\meanCapacity{\bar{C}}

\def\omHat{\omega}
\def\kprime{\zeroIndex^\prime}
\def\Kprime{\numZeros^\prime}

\def\Xwilk{\X_\mathrm{W}}
\def\radiusStar{\radius_\ast}

\def\sync{\text{sync}}
\def\normCorr{\Gamma}
\def\sumCorr{U}
\def\energyCorr{V}
\def\nprime{\idftIndex^\prime}
\def\Nprime{\idftSize^\prime}
\def\percentage{\lambda}
\def\syncSample{\generalIndex_\text{sync}}
\def\rxTemplate{\tilde{\vecsym{v}}}
\def\templateMat{\vecsym{T}}
\def\numZerosTM{\numZeros_{\text{tm}}}

\def\channelEst{\hat{\subcarrierGain}}
\def\equalizer{F}
\def\imActiveCarriers{I}
\def\packetDuration{T_\mathrm{packet}}

\acrodef{WSN}{wireless sensor network}
\acrodef{USRP}{universal software radio peripheral}
\acrodef{SN}{sensor node}
\acrodef{FC}{fusion center}
\acrodef{MAC}{multiple-access channel}
\acrodef{FL}{federated learning}
\acrodef{ED}{edge device}
\acrodef{CS}{compressed sensing}
\acrodef{ES}[BS]{base station}
\acrodef{DCN}{data center network}
\acrodef{RIS}{reconfigurable intelligent surfaces}
\acrodef{IMC}{in-memory computing}
\acrodef{FPGA}{field-programmable gate array}
\acrodef{SDR}{software-defined radio}
\acrodef{PS}{processing system}
\acrodef{SS}{soft synchronization}
\acrodef{IQ}{in-phase/quadrature}
\acrodef{IP}{intellectual property}
\acrodef{DMA}{direct-memory access}
\acrodef{RAM}{random access memory}
\acrodef{CC}{companion computer}
\acrodef{FEE}{function estimation error}
\acrodef{MSK}{minimum-shift keying}
\acrodef{TDMA}{time-domain multiple access}
\acrodef{PLNC}{physical-layer network coding}
\acrodef{UAV}{unmanned aerial vehicle}
\acrodef{LoRa}{Long-Range}
\acrodef{DC}{direct-current}
\acrodef{DAC}{digital-to-analog converter}
\acrodef{ADC}{anlog-to-digital converter}
\acrodef{CS}{complementary sequence}
\acrodef{GCP}{Golay complementary pair}
\acrodef{ANF}{algebraic normal form}
\acrodef{AACF}{aperiodic auto-correlation function}
\acrodef{AACFs}{aperiodic auto-correlation functions}
\acrodef{RM}{Reed-Muller}
\acrodef{MOZ}{modulation on zeros}
\acrodef{MOCZ}{modulation on conjugate-reciprocal zeros}
\acrodef{BMOCZ}{binary MOCZ}
\acrodef{BMOCZabstract}[BMOCZ]{binary modulation on conjugate-reciprocal zeros}
\acrodef{JBMOCZ}[J-BMOCZ]{jutted BMOCZ}
\acrodef{dizet}[DiZeT]{direct zero-testing}

\acrodef{PUCCH}{physical uplink control channel}
\acrodef{PRACH}{physical random access channel}

\acrodef{OBO}{output-power back-off}
\acrodef{ACLR}{adjacent-channel-leakage ratio}

\acrodef{LDPC}{low-density parity check}

\acrodef{PDF}{probability density function}
\acrodef{CDF}{cumulative distribution function}
\acrodef{CCDF}{complementary cumulative distribution function}

\acrodef{TBMA}{type-based multiple access}

\acrodef{MSFE}{mean-squared function error}
\acrodef{FEE}{function-estimation error}
\acrodef{CER}{computation error rate}
\acrodef{BCER}{block-computation error rate}
\acrodef{CFO}{carrier frequency offset}
\acrodef{TO}{timing offset}
\acrodef{PO}{phase offset}
\acrodef{RSSI}{received signal strength  information}

\acrodef{STLC}{space-time line code}
\acrodef{CCI}{co-channel interference}
\acrodef{CSIT}[CSIT]{\ac{CSI} at the transmitter}
\acrodef{CSIR}[CSIR]{\ac{CSI} at the receiver}
\acrodef{MIMO}{multiple-input multiple-output}
\acrodef{PC}{phase correction}
\acrodef{ZF}{zero-forcing}
\acrodef{ANOVA}{analysis of variance}

\acrodef{PCA}{principal component analysis}
\acrodef{TIG}{Technical Interest Group}

\acrodef{FSK}{frequency-shift keying}
\acrodef{PPM}{pulse-position modulation}
\acrodef{PAM}{pulse-amplitude modulation}

\acrodef{MRC}{maximum-ratio combining}
\acrodef{HP}{hard-coded participation}
\acrodef{HPA}{hard-coded participation with absentees}
\acrodef{SP}{soft-coded participation}
\acrodef{SIC}{successive interference cancellation}
\acrodef{FSK-MV}{\ac{FSK}-based \ac{MV}}
\acrodef{RF}{radio-frequency}
\acrodef{MF}{matched filter}
\acrodef{PPM}{pulse-position modulation}
\acrodef{CSK}{chirp-shift keying}
\acrodef{PPM-MV}[PPM-MV]{\ac{PPM}-based \ac{MV}}
\acrodef{DFT-s-OFDM}{discrete Fourier transform-spread orthogonal frequency division multiplexing}
\acrodef{SC}{single-carrier}
\acrodef{SGD}{stochastic gradient descent}
\acrodef{signSGD}{sign stochastic gradient descent}

\acrodef{SL}{split learning}
\acrodef{SNR}{signal-to-noise ratio}
\acrodef{RMSE}{root-mean-squared error}
\acrodef{OFDM}{orthogonal frequency division multiplexing}
\acrodef{DFT}{discrete Fourier transform}
\acrodef{PSK}{phase-shift keying}
\acrodef{QAM}{quadrature amplitude modulation}
\acrodef{QPSK}{quadrature phase-shift keying}
\acrodef{PMEPR}{peak-to-mean envelope power ratio}
\acrodef{BER}{bit error rate}
\acrodef{SNR}{signal-to-noise ratio}
\acrodef{PSD}{power spectral density}
\acrodef{SE}{spectral efficiency}
\acrodef{CP}{cyclic prefix}
\acrodef{AWGN}{additive white Gaussian noise}
\acrodef{CFR}{channel frequency response}
\acrodef{CIR}{channel impulse response}
\acrodef{MMSE}{minimum mean-squared error}
\acrodef{LMMSE}{linear minimum mean-squared error}
\acrodef{BPSK}{binary phase shift keying}
\acrodef{BPSK}{quadrature phase shift keying}
\acrodef{BLER}{block error rate}
 \acrodef{ML}{maximum likelihood}
\acrodef{PHY}{physical layer}
\acrodef{PA}{power amplifier}
\acrodef{IDFT}{inverse discrete Fourier transform}
\acrodef{DoF}{degrees-of-freedom}
\acrodef{IoT}{Internet of Things}
\acrodef{mMTC}{massive machine-type communication}
\acrodef{URLLC}{ultra-reliable low-latency communication}
\acrodef{FDE}{frequency-domain equalization}
\acrodef{RF}{radio-frequency}
\acrodef{IM}{index modulation}
\acrodef{MF}{matched filter}
\acrodef{PPM}{pulse-position modulation}

\acrodef{MSE}{mean-squared error}
\acrodef{MRT}{maximum-ratio transmission}
\acrodef{ERC}{equal-ratio combining}
\acrodef{BAA}{broadband analog aggregation}
\acrodef{OBDA}{one-bit broadband digital aggregation}
\acrodef{FEEL}{federated edge learning}
\acrodef{FL}{federated learning}
\acrodef{UL}{uplink}
\acrodef{UCI}{uplink control information}
\acrodef{OAC}{over-the-air computation}
\acrodef{TCI}{truncated-channel inversion}
\acrodef{MV}{majority vote}
\acrodef{CNN}{convolution neural network}
\acrodef{ReLU}{rectified-linear unit}
\acrodef{CSI}{channel state information}
\acrodef{PAPR}{peak-to-average power ratio}
\acrodef{SC}{single-carrier}
\acrodef{iid}[IID]{independent and identically distributed}
\acrodef{RMS}{root-mean-square}
\acrodef{4G}{fourth generation}
\acrodef{5G}{Fifth Generation}
\acrodef{6G}{Sixth Generation}
\acrodef{NR}{New Radio}
\acrodef{LTE}{Long-Term Evolution}
\acrodef{OFDMA}{orthogonal frequency division multiple access}
\acrodef{HARQ}{hybrid automatic repeat request}
\acrodef{D2D}{Device-to-Device}
\acrodef{NOMA}{non-orthogonal multiple access}
\acrodef{OMA}{orthogonal multiple access}
\acrodef{IMT}{International Mobile Telecommunications}
\acrodef{ITU}{International Telecommunication Union}

\acrodef{PDP}{power-delay profile}
\acrodef{TBMA}{type-based multiple access}
\acrodef{ISI}{intersymbol interference}
\acrodef{MLSE}{maximum likelihood sequence estimator}
\acrodef{LTI}{linear time-invariant}
\acrodef{ISAC}{integrated sensing and communication}
\acrodef{DL}{deep learning}
\acrodef{AE}{autoencoder}
\acrodef{AEs}{autoencoders}
\acrodef{MLP}{multi-layer perceptron}
\acrodef{DTFT}{discrete-time Fourier transform}
\acrodef{IDTFT}{inverse discrete-time Fourier transform}
\acrodef{FFT}{fast Fourier transform}
\acrodef{DPSK}{differential phase-shift keying}

\acrodef{ICI}{intercoefficient interference}
\acrodef{CPC}{cyclically permutable code}
\acrodef{ACPC}{affine CPC}

\acrodef{TM}{time mapping}
\acrodef{FM}{frequency mapping}
\acrodef{TFM}{time-frequency mapping}

\acrodef{LLR}{log-likelihood ratio}
\acrodef{PLLR}{pseudo-log-likelihood ratio}

\acrodef{TX}{transmitted}
\acrodef{RX}{received}
\acrodef{DCSK}{differential chaos-shift keying}

\acrodef{I}{in-phase}
\acrodef{Q}{quadrature}

\def\reviewColor{\color{black}}

\begin{document}
	
	\title{
		{Jutted BMOCZ for Non-Coherent OFDM}\\
		\thanks{Parker Huggins was with the Department of Electrical Engineering, University of South Carolina, Columbia, SC 29206 USA. He is now with the Department of Electrical and Computer Engineering, New York University, Brooklyn, NY 11201 USA (e-mail: parker.huggins@nyu.edu).}
		\thanks{Alphan~\c{S}ahin is with the Department of Electrical Engineering, University of South Carolina, Columbia, SC 29206 USA (e-mail: asahin@mailbox.sc.edu).}
		\thanks{This paper was presented in part at the IEEE International Symposium on Personal, Indoor, and Mobile Radio Communications 2025~\cite{huggins2025fourier}.}	
		\thanks{This article has supplementary downloadable material available at https://github.com/parkerkh/Jutted-BMOCZ, provided by the authors.}
		\author{Parker~Huggins,~\IEEEmembership{Graduate Student Member,~IEEE} and Alphan~\c{S}ahin,~\IEEEmembership{Member,~IEEE}} 
		\thanks{This work has been supported by the National Science Foundation through the award CNS-2438837.}
	}
	
	\maketitle
	
	\begin{abstract}  	
		In this work, we propose a zero constellation for \ac{BMOCZabstract}, called \ac{JBMOCZ}, and study its application to non-coherent \ac{OFDM}. With \ac{JBMOCZ}, we introduce rotational asymmetry to the zero constellation for Huffman \ac{BMOCZabstract}, which removes ambiguity at the receiver under a uniform rotation of the zeros. The asymmetry is controlled by the magnitude of ``jutted" zeros and enables the receiver to estimate zero rotation using a simple cross-correlation. The proposed method, however, leads to a natural trade-off between asymmetry and zero stability. Accordingly, we introduce a reliability metric to measure the stability of a polynomial's zeros under an additive perturbation of the coefficients, and we apply the metric to optimize the \ac{JBMOCZ} zero constellation parameters. We then combine the advantages of \ac{JBMOCZ} and Huffman \ac{BMOCZabstract} to design a hybrid waveform for \ac{OFDM} with \ac{BMOCZabstract} (\ac{OFDM}-\ac{BMOCZabstract}). {\reviewColor The pilot-free waveform enables blind synchronization/detection and has a fixed peak-to-average power ratio that is independent of the message.} Finally, we assess the proposed scheme through simulation and demonstrate non-coherent \ac{OFDM}-\ac{BMOCZabstract} using low-cost software-defined radios. 
	\end{abstract}
	
	\begin{IEEEkeywords}
		Autocorrelation, BMOCZ, OFDM, polynomial, timing offset, zero stability.
	\end{IEEEkeywords}
	
	\acresetall
	\section{Introduction} \label{sec:intro}
	
	\acused{OFDM}
	
	\lettrine{O}{rthogonal} frequency division multiplexing (OFDM) is ubiquitous in modern wireless networks. The widespread adoption of \ac{OFDM} is driven by its flexibility in resource allocation, its robustness to frequency-selective fading, and its compatibility with multi-antenna and multi-user systems. Together with its implementation simplicity, these properties have established \ac{OFDM} as the dominant waveform in the standards, including in ongoing work towards 6G~\cite{3gppTD1,3gppTD2}. Yet, \ac{OFDM} often employs coherent modulations, such~as \ac{QAM}, which necessitate pilot overhead to estimate the channel and equalize the received data symbols. This overhead can comprise a significant fraction of the total packet for small payloads. In such settings, \emph{non-coherent} detection is an attractive alternative, where the data are decoded without knowledge of the instantaneous \ac{CSI}.
	
	There are various non-coherent \ac{OFDM} schemes proposed in the literature. Principal among these is non-coherent \ac{OFDM} with \ac{IM}~\cite{choi2017non}, where the bits are encoded into the indices of the active subcarriers, which can be viewed as a generalization of \ac{FSK}. Many works have studied extensions to non-coherent \ac{OFDM}-\ac{IM}, such as \ac{PAPR} reduction~\cite{gopi2020optimized}, code design~\cite{fazeli2019coded}, and receive diversity~\cite{fazeli2020generalized}. Another prominent class of non-coherent \ac{OFDM} schemes are those based on differential encodings. In~\cite{chenhu2020non}, for example, the authors propose a differential modulation for \ac{MIMO}-\ac{OFDM} that exploits the beamforming capabilities of the base station to enable low-complexity, non-coherent detection in the downlink. Similarly, in~\cite{van2020deep}, \ac{DL} is employed for non-coherent detection in the downlink using \ac{DPSK}. Differential \ac{OFDM} has also been studied extensively for underwater acoustic communications, as in~\cite{cai2022joint}, which considers non-coherent \ac{OFDM} with \ac{DCSK}. 
	
	In this study, however, we focus on an unconventional but emerging non-coherent modulation technique, called \ac{MOCZ}, which was first proposed in~\cite{walk2017short}. The core principle of \ac{MOCZ} is to transmit the coefficients of a polynomial whose zeros encode information bits. If the coefficients are modulated in time, as~with a single-carrier waveform, then the convolution of the coefficients with the \ac{CIR} becomes polynomial multiplication in the $z$-domain, which preserves the zeros of the transmitted polynomial, regardless of the \ac{CIR} realization~\cite{walk2019principles}. In this way, the zeros pass through the channel unaltered, enabling the receiver to estimate the transmitted bits without knowledge of the instantaneous \ac{CSI}. However, in practice the coefficients are perturbed by additive noise at the receiver, which non-trivially distorts the zeros. \ac{MOCZ} is especially sensitive to noise, as the zeros of a polynomial are, in general, unstable under an additive perturbation of the coefficients. To combat this sensitivity, the authors in~\cite{walk2019principles} propose a stable \ac{BMOCZ} design, called \emph{Huffman \ac{BMOCZ}}, where the coefficients of each polynomial form a Huffman sequence~\cite{huffman1962generation}. For its numerical stability and favorable autocorrelation properties (see Section~\ref{subsec:preliminaries}), Huffman \ac{BMOCZ} has been the subject of numerous recent studies. In~\cite{walk2020practical}, for example, the authors address the effects of hardware impairments on Huffman \ac{BMOCZ} and propose \acp{CPC} for non-coherent detection under uniform zero rotation. Multi-user \ac{MOCZ} is studied in~\cite{walk2021multi} and~\cite{eren2025non}. The authors in~\cite{siddiqui2023spectrally} apply faster-than-Nyquist signaling to improve the spectral efficiency of \ac{MOCZ}. In~\cite{sahin2024over}, Huffman \ac{BMOCZ} is utilized for non-coherent, over-the-air majority vote computation. In~\cite{sun2023non} and~\cite{ott2025binary}, receive diversity and soft-decision decoding are studied, respectively.
	
	To our knowledge, besides our preliminary work in~\cite{huggins2024optimal}, multi-carrier \ac{MOCZ} has not been considered in the literature. The primary reason for this fact is that \ac{MOCZ} was proposed in~\cite{walk2019principles} for communication over unknown multi-path channels, which are modeled via convolutions in the time domain. Indeed,~\cite{walk2019principles} and~\cite{walk2020practical} consider a \emph{one-shot} scenario, where a single polynomial is transmitted whose duration may even be shorter than the delay spread. In this regime, single-carrier \ac{MOCZ} is attractive. However, we seek to extend \ac{MOCZ} to more conventional communication scenarios, such as uplink \ac{OFDMA}, by~exploiting the unique properties of \ac{BMOCZ} \emph{together} with the flexibility of \ac{OFDM}. Our specific motivation for studying \ac{OFDM}-\ac{BMOCZ} is driven by three factors: 1) its near-optimal \ac{PAPR} with Huffman sequences, which is in contrast to the high \ac{PAPR} of single-carrier Huffman \ac{BMOCZ}~\cite{sasidharan2024alternative}; 2) its support for continuous transmission with a \ac{CP}, which avoids the null between adjacent polynomials required for single-carrier \ac{MOCZ}~\cite{walk2019principles,walk2020practical,walk2021multi}; and 3) its capacity to multiplex users over orthogonal time and frequency resources. Of course, integrating \ac{BMOCZ} with \ac{OFDM} has a particular disadvantage: it sacrifices the invariance of \ac{BMOCZ} to the \ac{CIR} realization. Accordingly, we investigate techniques to address frequency selectivity for \ac{OFDM}-\ac{BMOCZ}, including blind channel estimation. Our analysis takes into account timing and frequency offsets, as considered in~\cite{walk2020practical}. We note, however, that~\cite{walk2020practical} ignores pulse shaping and the loss of orthogonality under a frequency offset, which we account for in this work.   
	
	\subsection{Contributions} \label{subsec:contributions}
	
	Our principal contribution is \ac{JBMOCZ}, which increases the flexibility of Huffman \ac{BMOCZ} and is applicable to \emph{both} single- and multi-carrier \ac{MOCZ}. However, this work emphasizes \ac{OFDM}, where \ac{JBMOCZ} is utilized to correct timing errors. For reference, we summarize our specific contributions below.
	\begin{itemize} 
		\item We propose \ac{JBMOCZ}, a modified Huffman \ac{BMOCZ} zero constellation featuring a ``jutted" zero pair. The key idea of \ac{JBMOCZ} is to introduce rotational asymmetry to the Huffman \ac{BMOCZ} zero constellation. The asymmetry removes the ambiguity during decoding induced by a uniform rotation of the zeros, which is caused by either a timing or frequency offset. We exploit the asymmetry to derive a method for estimating zero rotation that relies on a simple cross-correlation. Our approach avoids the use of \acp{CPC} proposed for Huffman \ac{BMOCZ} in~\cite{walk2020practical} and thereby offers increased flexibility, including support for soft-decision decoding. In fact, we demonstrate that both the \ac{BER} and \ac{BLER} of polar-coded \ac{JBMOCZ} outperform Huffman \ac{BMOCZ} with a \ac{CPC}.
		\item We introduce a reliability metric to measure the stability of a polynomial's zeros under an additive perturbation of the coefficients. The metric is based on the unique interpretation of a polynomial as a frequency-selective but parallel channel, where the channel capacity corresponds to zero stability. Through several examples, we show that the reliability metric enables one to quantify the stability of individual zeros, as well as polynomials and entire \ac{MOCZ} codebooks. We then apply the reliability metric to optimize the zero constellation parameters using an approach that generalizes to other \ac{MOCZ} schemes.
		\item We design a hybrid \ac{OFDM}-\ac{BMOCZ} waveform enabling time-frequency synchronization and detection, all without pilot overhead. The waveform exploits \ac{JBMOCZ} for timing while otherwise leveraging the zero stability and low \ac{PAPR} of Huffman \ac{BMOCZ}. By design, however, the waveform is fundamentally limited by the coherence bandwidth. To address frequency selectivity, we propose a \ac{BMOCZ}-based method for blind channel estimation that permits implementation with larger bandwidths at the expense of increased \ac{PAPR}. Finally, we assess the proposed waveform through simulation and demonstrate its application using \acp{SDR}.
	\end{itemize}
	
	\emph{Organization:} The remainder of the paper is organized as follows: Section~\ref{sec:system_model} provides the system model, including preliminaries on \ac{BMOCZ} and its integration with \ac{OFDM}; Section~\ref{sec:jbmocz} introduces \ac{JBMOCZ}, our method for estimating zero rotation, and our metric for measuring zero stability; Section~\ref{sec:framework} proposes a hybrid waveform for \ac{OFDM}-\ac{BMOCZ}, analyzes its \ac{PAPR}, and discusses blind channel estimation and soft-decision decoding; Section~\ref{sec:results} presents our simulation results and an \ac{SDR} implementation; Section~\ref{sec:conclusion} concludes the paper with remarks on future work.
	
	\emph{Notation:} The real, complex, and natural numbers are denoted by $\R$, $\C$, and $\N$, respectively, and we denote by $[N]=\{0,1,\hdots,N-1\}$ the least $N$ natural numbers. The binary field of dimension $n$ is denoted by $\binaryField^n$, and the polynomial ring over $\C$ is denoted by $\C[z]$.  We denote the complex conjugate and principal value of $z\in\C$ by $\overline{z}$ and $\Arg(z)$, respectively. Small boldface letters denote vectors; large boldface letters denote matrices. The $\ell_2$-norm of a vector $\vector\in\C^N$ is denoted by $||\vector||_2=\sqrt{\vector\herm\vector}$. The operator $\E[\cdot]$ returns the expected value of its argument, and $\complexnormal(0,\sigma^2)$ denotes the circularly-symmetric complex normal distribution with mean zero and variance $\sigma^2$.
	
	\section{System Model} \label{sec:system_model}
	
	We consider a single-user, single-antenna scenario, where the transmitter and receiver communicate over a wireless multi-path channel. The system employs \ac{OFDM}, and each packet comprises $\numSubcarriers$ active subcarriers and $\numOFDMsymbols$ \ac{OFDM} symbols. Let $\sampleRate$ and $\idftSize>\numSubcarriers$ denote the sampling rate and the \ac{IDFT} size, respectively. Then, the duration of each \ac{OFDM} symbol in the time domain is $\symbolDuration=\idftSize/\sampleRate$, and the spacing between adjacent subcarriers in the frequency domain is $\subcarrierSpacing=1/\symbolDuration$. For a \ac{CP} duration $\cpDuration$, the $\ofdmSymbolIndex$th transmitted \ac{OFDM} symbol is given by  
	\begin{equation} \label{eq:mth_tx_symbol}
		s_\ofdmSymbolIndex(t) = \sum_{\subcarrierIndex=0}^{\numSubcarriers-1} \dataSymbol_{\subcarrierIndex,\ofdmSymbolIndex} \e^{j2\pi \subcarrierIndex\subcarrierSpacing t}\,, \ \ t\in[-\cpDuration,\symbolDuration)\,,
	\end{equation} 
	where $\dataSymbol_{\subcarrierIndex,\ofdmSymbolIndex}\in\C$ is the data symbol on the $\subcarrierIndex$th subcarrier of the $\ofdmSymbolIndex$th \ac{OFDM} symbol. 
	
	During transmission, we assume the channel is quasi-static with $\channelLength$ paths, where $\channelGain_\channelIndex\in\C$ and $\delay_\channelIndex>0$ are the gain and delay of the $\channelIndex$th path, respectively. At the receiver, the oscillator used for downconversion is not synchronized with the oscillator at the transmitter, inducing a \ac{CFO} $\cfo$. Additionally, since the receiver has no knowledge of the \ac{CSI} or the time of transmission, let $\timingOffset>0$ denote a \ac{TO} from the point of initial acquisition, which defines the duration of the received signal $\rxSignal(t)$ residing before the packet (see Fig.~\ref{fig:ofdm_diagram}). Under these conditions, the $\ofdmSymbolIndex$th \ac{OFDM} symbol observed by the receiver has the form
	\begin{equation} \label{eq:rx_signal}
		\rxSignal_\ofdmSymbolIndex(t) = \e^{j2\pi \cfo t} \sum_{\channelIndex=0}^{\channelLength-1} \channelGain_\channelIndex \txSignal_\ofdmSymbolIndex(t-\timingOffset-\delay_\channelIndex) + 
		\noiseElem_\ofdmSymbolIndex(t)\,,
	\end{equation}
	where $\noiseElem_\ofdmSymbolIndex(t)$ is zero-mean \ac{AWGN}. When the \ac{CP} duration exceeds the delay spread of the channel, i.e., when $\cpDuration>\delay_\channelLength$, each subcarrier observes a single complex gain, and the received symbol in~\eqref{eq:rx_signal} can be reexpressed as~\cite{zhan2015new,xie2011exact} 
	\begin{equation} \label{eq:mth_rx_symbol}
		\rxSignal_\ofdmSymbolIndex(t) = \e^{j2\pi \cfo t} \sum_{\subcarrierIndex=0}^{\numSubcarriers-1} \dataSymbol_{\subcarrierIndex,\ofdmSymbolIndex} \subcarrierGain_{\subcarrierIndex,\ofdmSymbolIndex} \e^{j2\pi\subcarrierIndex\subcarrierSpacing(t-\timingOffset)} + \noiseElem_\ofdmSymbolIndex(t)\,,
	\end{equation}
	where $\subcarrierGain_{\subcarrierIndex,\ofdmSymbolIndex}\in\C$ is defined as
	\begin{equation} \label{eq:subcarrier_gain}
		\subcarrierGain_{\subcarrierIndex,\ofdmSymbolIndex} \triangleq \e^{j2\pi \cfo (\symbolDuration+\cpDuration)\ofdmSymbolIndex}
		\sum_{\channelIndex=0}^{\channelLength-1} \channelGain_\channelIndex \e^{-j2\pi\subcarrierIndex\subcarrierSpacing\delay_\channelIndex}\,.
	\end{equation}
	In~\eqref{eq:mth_rx_symbol}, the \ac{CFO} induces a shift in the frequency domain, while the \ac{TO} results in a phase modulation of the data symbols. Although the frequency shift introduced by the \ac{CFO} breaks the orthogonality between the subcarriers, in practical systems $\cfo$ can be minimized through a control loop~\cite[Chapter~8.1.1]{dahlman20205g} or readily compensated using conventional methods, such as that of Schmidl and Cox~\cite{schmidl1997robust,moose1994technique}. In fact, we augment the latter method in Section~\ref{subsec:synch} for \ac{OFDM}-\ac{BMOCZ}, enabling the receiver to estimate the \ac{CFO} and decode header bits with a single \ac{OFDM} symbol. Still, the distortion due to the \ac{TO} is the primary focus of this work; it is addressed in Section~\ref{sec:jbmocz} and a corresponding implementation is described in Section~\ref{subsec:to_correction}. 
	
	\subsection{Preliminaries on Binary MOCZ} \label{subsec:preliminaries}
	
	The principle of \ac{BMOCZ} is to encode information bits into the zeros of the transmitted sequence's $z$-transform. Considering a $\numZeros$-bit message $\bMessage=(\bit_0,\bit_1,\hdots,\bit_{\numZeros-1})\in\binaryField^\numZeros$, the $\zeroIndex$th bit $\bit_\zeroIndex$ is mapped to the zero of a polynomial as 
	\begin{equation} \label{eq:bmocz_encoder}
		\zero_\zeroIndex = 
		\begin{cases}
			\zeroMag_\zeroIndex \e^{j\zeroPhase_\zeroIndex}, & \bit_\zeroIndex=1\\
			\zeroMag_\zeroIndex^{-1} \e^{j\zeroPhase_\zeroIndex}, & \bit_\zeroIndex=0\\
		\end{cases}\,, \ \ \zeroIndex\in[\numZeros]\,,
	\end{equation}  
	where the magnitude $\zeroMag_\zeroIndex>1$ and phase $\zeroPhase_\zeroIndex\in[0,2\pi)$ define the \emph{conjugate-reciprocal} zero pair $\crPair{\zeroIndex}\triangleq\{\zeroMag_\zeroIndex \e^{j\zeroPhase_\zeroIndex},\zeroMag_\zeroIndex^{-1} \e^{j\zeroPhase_\zeroIndex}\}$. Under~\eqref{eq:bmocz_encoder}, each binary message $\bMessage\in\binaryField^\numZeros$ maps to a distinct \emph{zero pattern} $\bZero=[\zero_0,\zero_1,\hdots,\zero_{\numZeros-1}]\trans\in\C^\numZeros$, which uniquely defines the polynomial
	\begin{equation} \label{eq:bmocz_poly}
		\X(z) = \sum_{\zeroIndex=0}^{\numZeros} x_\zeroIndex z^\zeroIndex = x_\numZeros \prod_{\zeroIndex=0}^{\numZeros-1} (z-\zero_\zeroIndex)\,, 
	\end{equation}
	up to the scalar $x_\numZeros\neq0\in\C$. The polynomial $\X(z)$ has $\numZeros+1$ coefficients and is the $z$-transform of the sequence $\txSeq=[x_0,x_1,\hdots,x_\numZeros]\trans\in\C^{\numZeros+1}$. To render~\eqref{eq:bmocz_poly} one-to-one, the leading coefficient $x_\numZeros$ is selected so that $||\txSeq||_2^2=\numZeros+1$. The set of all normalized polynomial sequences then defines a \emph{codebook}, denoted by $\polyCodebook$, with cardinality $|\polyCodebook|=2^\numZeros$.
	
	The merit of single-carrier \ac{BMOCZ} is that, regardless of the \ac{CIR} realization, the zero structure of $\X(z)$ is preserved at the receiver~\cite{walk2017short,walk2019principles}. For example, consider a single-carrier waveform modulated by $\txSeq\in\polyCodebook$. After matched filtering, assuming no \ac{TO} or \ac{CFO}, the received sequence $\rxSeq\in\C^{\totLength}$ is given by $\rxSeq=\txSeq\ast\channelIR+\noiseSeq$, where $\channelIR\in\C^{\effLength}$ is the \ac{CIR} of effective length $\effLength\geq1$, and $\noiseSeq\in\C^{\totLength}$ is \ac{AWGN} with $\totLength=\numZeros+\effLength$. In the $z$-domain, the convolution of $\txSeq$ with the \ac{CIR} becomes polynomial multiplication as
	\begin{align} \label{eq:rx_poly}
		\Y(z) &= \X(z)\H(z) + \W(z) \notag \\
		&= x_\numZeros h_{\effLength-1} 
		\prod_{\zeroIndex=0}^{\numZeros-1}(z-\zero_\zeroIndex) 
		\prod_{\effIndex=0}^{\effLength-2}(z-\channelZero_{\effIndex}) \notag\\
		&\quad +\, w_{\totLength-1} 
		\prod_{\totIndex=0}^{\totLength-2}(z-\noiseZero_{\totIndex})\,,
	\end{align}
	where $\H(z)$ and $\W(z)$ are the $z$-transforms of $\channelIR$ and $\noiseSeq$, respectively. Observe that the received polynomial in~\eqref{eq:rx_poly} has $\totLength-1$ zeros, i.e., the $\numZeros$ data zeros $\{\zero_\zeroIndex\}$ and the $\effLength-1$ zeros $\{\channelZero_{\effIndex}\}$ introduced by the channel, all perturbed by noise. Since the data zeros remain approximately as roots of the received polynomial (exactly in the noiseless case), \ac{BMOCZ} enables non-coherent detection at the receiver. In particular, a simple \emph{\ac{dizet}} decoder is introduced in~\cite{walk2019principles}, where $\Y(z)$ is evaluated (tested) at the zeros in $\crPair{\zeroIndex}$, and the $\zeroIndex$th message bit $\bit_\zeroIndex$ is estimated as
	\begin{equation} \label{eq:hd_dizet}
		\estimatedBit_\zeroIndex = 
		\begin{cases}
			1, & |\Y(\zeroMag_\zeroIndex\e^{j\zeroPhase_\zeroIndex})| < \zeroMag_\zeroIndex^{\totLength-1} |\Y(\zeroMag_\zeroIndex^{-1}\e^{j\zeroPhase_\zeroIndex})|\\
			0, & \text{otherwise}
		\end{cases}\,, \ \ \zeroIndex\in[\numZeros]\,.
	\end{equation}
	The evaluation $|\Y(\zeroMag_\zeroIndex^{-1}\e^{j\zeroPhase_\zeroIndex})|$ is scaled by $\zeroMag_\zeroIndex^{\totLength-1}$ to balance the growth of $|\Y(z)|$ for $|z|>1$ and thereby ensure a fair comparison between $\zero_\zeroIndex\raisedOne\triangleq\zeroMag_\zeroIndex\e^{j\zeroPhase_\zeroIndex}$ and $\zero_\zeroIndex\raisedZero\triangleq\zeroMag_\zeroIndex^{-1}\e^{j\zeroPhase_\zeroIndex}$. See~the analysis in~\cite[Section~III-D]{walk2019principles} for more details.
	
	It is worth noting that the \ac{dizet} decoder is sensitive to zero perturbation under noise, since movement of the zeros can induce erroneous evaluations in~\eqref{eq:hd_dizet} and thus a misdetection. Therefore, as discussed in~\cite[Section~IV]{walk2019principles} and Section~\ref{subsec:optimization} of this work, it is important to optimize the zero constellation parameters $\{\zeroMag_\zeroIndex\}$ and $\{\zeroPhase_\zeroIndex\}$ to ensure robustness against noise. Furthermore, the \ac{dizet} decoder assumes knowledge of the effective \ac{CIR} length via $\totLength=\numZeros+\effLength$. For single-carrier \ac{BMOCZ}, then, it is necessary to estimate $\effLength$ at the receiver to limit the number of channel zeros introduced in~\eqref{eq:rx_poly} and ensure proper scaling in~\eqref{eq:hd_dizet}. With \ac{OFDM}-\ac{BMOCZ}, however, the coefficients are multiplied pointwise with channel gains (not convolved), and hence no zeros are introduced to the transmitted polynomials. The distortion of the zeros under pointwise multiplication depends on the frequency selectivity of the channel and the mapping of the polynomial coefficients to time-frequency resources (see Section~\ref{subsec:bmocz_ofdm}).
	
	We now define the \ac{AACF} of a sequence $\txSeq\in\C^{\numZeros+1}$, which is of importance moving forward. 
	\begin{definition}
		The \ac{AACF} of $\txSeq=[x_0,x_1,\hdots,x_\numZeros]\trans\in\C^{\numZeros+1}$ is the complex-valued function $\A(z) \triangleq \sum_{\autoIndex=-\numZeros}^{\numZeros}a_\autoIndex z^\autoIndex$, where the coefficient $a_\autoIndex\in\C$ is defined as 
		\begin{equation} \label{eq:auto_coeffs}
			a_\autoIndex \triangleq 
			\begin{cases} 
				\sum_{i=0}^{\numZeros-\autoIndex} \overline{x_i} x_{i+\autoIndex}, & 0 \leq \autoIndex \leq \numZeros\\
				\sum_{i=0}^{\numZeros+\autoIndex} x_i \overline{x_{i-\autoIndex}}, & -\numZeros \leq \autoIndex < 0\\
				0, & \textup{otherwise}
			\end{cases}\,.
		\end{equation}
	\end{definition}
	\noindent 
	For \ac{BMOCZ}, the \ac{AACF} of $\txSeq\in\polyCodebook$ can be computed directly from the zeros of $\X(z)$ as 
	\begin{align} \label{eq:auto_poly}
		\A(z) &= \sum_{\autoIndex=-\numZeros}^{\numZeros} a_\autoIndex z^\autoIndex = \X(z)\overline{\X(1/\overline{z})} \notag\\
		&= z^{-\numZeros} x_\numZeros\overline{x_0} \prod_{\zeroIndex=0}^{\numZeros-1} (z-\zero_\zeroIndex) \prod_{\zeroIndex=0}^{\numZeros-1} (z-1/\overline{\zero_\zeroIndex})\,.
	\end{align}
	It follows that $\A(z)$ in~\eqref{eq:auto_poly} is independent of the zero pattern, and hence each sequence $\txSeq\in\polyCodebook$ has an \emph{identical} \ac{AACF}. By leveraging this property and the underlying zero structure of Huffman sequences~\cite{huffman1962generation}, \emph{Huffman \ac{BMOCZ}} is proposed in~\cite{walk2019principles}, where the zeros in~\eqref{eq:bmocz_encoder} lie uniformly along two circles of radii $\radius$ and $\radius^{-1}$, i.e., $\zeroMag_\zeroIndex=\radius>1$ and $\zeroPhase_\zeroIndex=2\pi\zeroIndex/\numZeros$, $\zeroIndex\in[\numZeros]$. A polynomial $\Xhuffman(z)$ generated with this encoding is called a \emph{Huffman polynomial}, since its coefficients $\txSeq\in\polyCodebook_{\text{H}}$ form a Huffman sequence. Notably, Huffman sequences have an impulsive \ac{AACF} of the form 
	\begin{equation} \label{eq:huffman_auto}
		\Ahuffman(z)=(\numZeros+1)(-\etaH z^{-\numZeros} + 1 - \etaH z^\numZeros)\,,
	\end{equation}
	where $\etaH\triangleq1/(\radius^\numZeros+\radius^{-\numZeros})$~\cite{walk2019principles}. The impulse-like \ac{AACF} of Huffman sequences makes them ideal for applications in radar~\cite{ackroyd1970design,dehkordi2023integrated}. Furthermore, as demonstrated through \ac{BER} simulations and polynomial perturbation analysis, the zeros of Huffman polynomials are stable under noise~\cite{walk2019principles}, a desirable property for the \ac{dizet} decoder in~\eqref{eq:hd_dizet}. A disadvantage of Huffman sequences in the time domain, however, is their large~\ac{PAPR}, as the first and last coefficients alone carry almost half of the sequence energy~\cite{walk2019principles,walk2020practical,sasidharan2024alternative}.
	
	\subsection{Subcarrier Mappings for OFDM-BMOCZ} \label{subsec:bmocz_ofdm}
	
	Assume that we transmit $\numPolys$ polynomials, each of degree $\numZeros$, and let $x_{\zeroIndex,\polyIndex}$ denote the $\zeroIndex$th coefficient of the $\polyIndex$th polynomial for $\zeroIndex\in[\numZeros+1]$ and $\polyIndex\in[\numPolys]$. In this work, the coefficients are mapped to the subcarriers using \ac{FM}~\cite{huggins2024optimal}, where $\numSubcarriers=\numZeros+1$ subcarriers and $\numOFDMsymbols=\numPolys$ \ac{OFDM} symbols are resourced, and the data symbol $\dataSymbol_{\subcarrierIndex,\ofdmSymbolIndex}$ in~\eqref{eq:mth_tx_symbol} is selected as $\dataSymbol_{\subcarrierIndex,\ofdmSymbolIndex} = x_{\subcarrierIndex,\ofdmSymbolIndex}$. Equivalently, the $\ofdmSymbolIndex$th \ac{OFDM} symbol is given~by the $\ofdmSymbolIndex$th transmit polynomial $\X_\ofdmSymbolIndex(z)$ evaluated along the unit~circle, i.e., $\txSignal_\ofdmSymbolIndex(t)=\sum_{\subcarrierIndex=0}^{\numZeros}x_{\subcarrierIndex,\ofdmSymbolIndex}\e^{j2\pi \subcarrierIndex\subcarrierSpacing t}=\X_\ofdmSymbolIndex(\e^{j2\pi\subcarrierSpacing t})$. For Huffman \ac{BMOCZ}, \ac{FM} yields very low \ac{PAPR} due to the impulsive \ac{AACF} of Huffman sequences [see Fig.~\ref{fig:papr_space}(a)]. However, \ac{FM} is sensitive to frequency selectivity, as the $\ofdmSymbolIndex$th received polynomial (noiseless) $\Y_\ofdmSymbolIndex(z)=\sum_{\subcarrierIndex=0}^{\numZeros}x_{\subcarrierIndex,\ofdmSymbolIndex}\subcarrierGain_{\subcarrierIndex,\ofdmSymbolIndex}z^\subcarrierIndex$ has coefficients given by a pointwise product over \emph{frequency}. \ac{FM} is thus suited for narrowband, small payload applications, such as transmitting \ac{UCI} within an \ac{OFDMA} framework. \Ac{TM} is an alternative to \ac{FM} with $\numSubcarriers=\numPolys$ subcarriers and $\numOFDMsymbols=\numZeros+1$ \ac{OFDM} symbols, where $\dataSymbol_{\subcarrierIndex,\ofdmSymbolIndex}$ in~\eqref{eq:mth_tx_symbol} is set to $\dataSymbol_{\subcarrierIndex,\ofdmSymbolIndex} = x_{\ofdmSymbolIndex,\subcarrierIndex}$. \ac{TM} is robust against frequency selectivity, as the $\subcarrierIndex$th received polynomial $\Y_\subcarrierIndex(z)=\sum_{\ofdmSymbolIndex=0}^{\numZeros}x_{\ofdmSymbolIndex,\subcarrierIndex}\subcarrierGain_{\subcarrierIndex,\ofdmSymbolIndex}z^\ofdmSymbolIndex$ has coefficients given by a pointwise product over \emph{time}, and $\subcarrierGain_{\subcarrierIndex,\ofdmSymbolIndex}$ is roughly constant in $\ofdmSymbolIndex$ by our assumption that the channel is quasi-static. The disadvantage of \ac{TM} is its high \ac{PAPR} with Huffman sequences (see \cite[Fig.~7]{huggins2024optimal}). Nevertheless, we show in Section~\ref{subsec:chest} that \ac{TM} can be exploited for blind channel estimation while simultaneously conveying information to the receiver. We refer the reader to~\cite[Fig.~2]{huggins2024optimal} for visualizations of \ac{TM} and \ac{FM}.
	
	\subsection{Problem Statement: Zero Rotation under Timing Error} \label{subsec:problem}
	
	After coarse synchronization to the first \ac{OFDM} symbol, it is conventional to ``step back" a few samples and ensure that the \ac{DFT} window excludes samples from the subsequent \ac{OFDM} symbol, which would lead to \ac{ISI}~\cite{yang2000timing}. In practice, then, there persists a residual \ac{TO} after coarse synchronization. The residual \ac{TO} depends on both the synchronization error and the \ac{OFDM} symbol index $\ofdmSymbolIndex$, as receiver clock drift results in gradual sampling misalignments~\cite{yang2000timing},\,\cite{stantchev2002time},\,\cite{lee2011joint}. Let $\timingOffsetEst>0$ denote a coarse estimate of the \ac{TO}, and assume a step back of $\stepBackSamples\in\N$~samples. We express the residual \ac{TO} as
	\begin{equation}
		\residualOffset_\ofdmSymbolIndex = \underbrace{\timingOffset-\left(\timingOffsetEst-\frac{\stepBackSamples}{\sampleRate}\right)}_{\text{synchronization error}}+\underbrace{\epsilon_\ofdmSymbolIndex}_{\substack{\text{clock}\\\text{drift}}}\,,
	\end{equation}
	where $\epsilon_\ofdmSymbolIndex\in\R$ is the aggregate clock drift by the symbol $\ofdmSymbolIndex$ (see Fig.~\ref{fig:ofdm_diagram}). With \ac{FM}, after synchronization and \ac{CP} removal, and assuming no \ac{CFO} or noise for the ease of exposition, the~$\idftIndex$th sample of the $\ofdmSymbolIndex$th \ac{OFDM} symbol in~\eqref{eq:mth_rx_symbol} becomes
	\begin{equation} \label{eq:mth_symbol_simplified}
		\rxSignalMod_\ofdmSymbolIndex(\idftIndex/\sampleRate) = \sum_{\subcarrierIndex=0}^{\numSubcarriers-1} y_{\subcarrierIndex,\ofdmSymbolIndex} \e^{-j2\pi\frac{\subcarrierIndex}{\idftSize}\residualOffset_\ofdmSymbolIndex\sampleRate}\e^{j2\pi\frac{\subcarrierIndex}{\idftSize}\idftIndex}\,, \ \ \idftIndex\in[\idftSize]\,,
	\end{equation}
	where $y_{\subcarrierIndex,\ofdmSymbolIndex} \triangleq x_{\subcarrierIndex,\ofdmSymbolIndex}\subcarrierGain_{\subcarrierIndex,\ofdmSymbolIndex}$. Taking the $\idftSize$-point \ac{DFT} of \eqref{eq:mth_symbol_simplified} to~retrieve the coefficients $\rxSeq_\ofdmSymbolIndex$, we obtain
	\begin{equation} \label{eq:to_effect}
		\tilde{y}_{\subcarrierIndex,\ofdmSymbolIndex} = y_{\subcarrierIndex,\ofdmSymbolIndex} \e^{-j2\pi\frac{\subcarrierIndex}{\idftSize}\residualOffset_\ofdmSymbolIndex\sampleRate}\,, \ \ \subcarrierIndex\in[\numZeros+1]\,.
	\end{equation}
	Then, defining the angle $\po_\ofdmSymbolIndex\triangleq2\pi\residualOffset_\ofdmSymbolIndex\sampleRate/\idftSize$ and transforming \eqref{eq:to_effect} into the $z$-domain, the received polynomial over the $\ofdmSymbolIndex$th \ac{OFDM} symbol is given by 
	\begin{align} \label{eq:zero_rotation}
		\Ymod_\ofdmSymbolIndex(z) &= \sum_{\subcarrierIndex=0}^{\numZeros} y_{\subcarrierIndex,\ofdmSymbolIndex} \e^{-j\po_\ofdmSymbolIndex\subcarrierIndex} z^\subcarrierIndex = \Y_\ofdmSymbolIndex(\e^{-j\po_\ofdmSymbolIndex}z) \notag\\
		&= \e^{-j\po_\ofdmSymbolIndex\numZeros} y_{\numZeros,\ofdmSymbolIndex} \prod_{\zeroIndex=0}^{\numZeros-1} (z - \tilde{\zero}_\zeroIndex\e^{j\po_\ofdmSymbolIndex})\,.
	\end{align}
	In~\eqref{eq:zero_rotation}, the coefficients of $\Y_\ofdmSymbolIndex(z)$ are phase modulated due~to the residual \ac{TO} $\residualOffset_\ofdmSymbolIndex\in[0,\cpDuration]$, which induces a uniform, counterclockwise rotation of the $\numZeros$ received zeros $\tilde{\bZero}$ through the angle $\po_\ofdmSymbolIndex\in[0,2\pi\cpDuration/\symbolDuration]$.
	
	Zero rotation is problematic for Huffman \ac{BMOCZ}, since any rotation of the zeros is ambiguous modulo $\baseAngle\triangleq2\pi/\numZeros$ (see~\cite[Fig.~4(a)]{walk2020practical}). To address this ambiguity, the authors in~\cite{walk2020practical} decompose the rotation $\po$ as $\po=(\intCFO+\fracCFO)\baseAngle$, where $\intCFO\in[\numZeros]$ and $\fracCFO\in[0,1)$ are integer and fractional multiples of the base angle $\baseAngle$, respectively. The fractional rotation $\fracCFO\baseAngle$ is estimated using an oversampled \ac{dizet} decoder, while the integer rotation $\intCFO\baseAngle$, which induces a cyclic permutation of the transmitted message for $\intCFO\geq1$, is determined using an \emph{\ac{ACPC}}. Although effective, the proposed \ac{ACPC} restricts the choice of coding to \acp{CPC} and places numerous constraints on the message length and rate. In this work, towards increasing the flexibility of Huffman \ac{BMOCZ} under zero rotation, we seek to answer the following question: \emph{how can we design a \ac{CPC}-free method for estimating zero rotation and apply it to synthesize \ac{OFDM}-\ac{BMOCZ} waveforms robust against hardware impairments at the physical layer?}
	
	\section{Jutted BMOCZ} \label{sec:jbmocz}
	
	\begin{figure*}[t!]
		\centering
		
		\subfloat[Zero constellation.]{\includegraphics[width=1.725in]{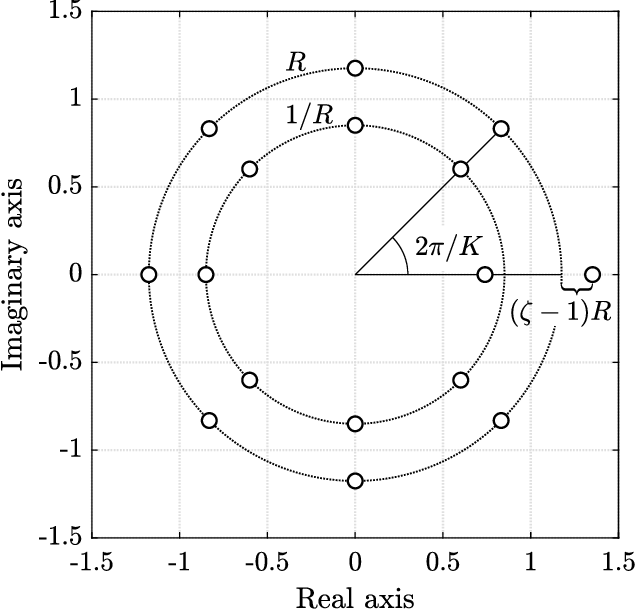}\label{subfig:jbmocz_constellation}}~
		\subfloat[Transmitted zeros.]{\includegraphics[width=1.725in]{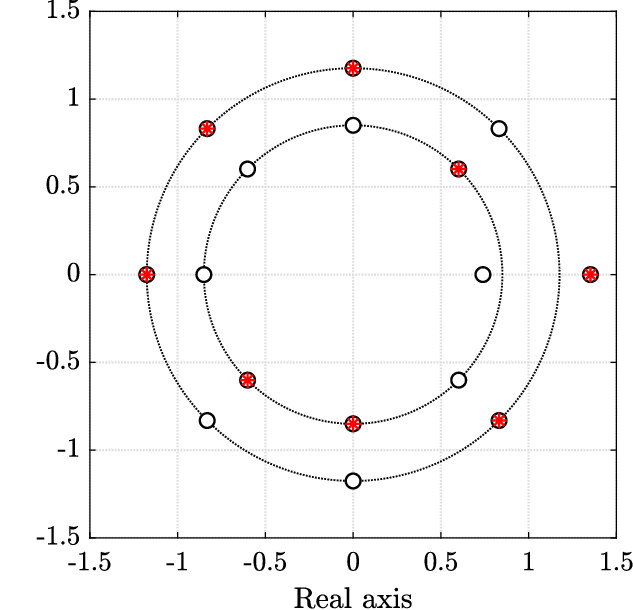}\label{subfig:jbmocz_tx}}~
		\subfloat[Received zeros.]{\includegraphics[width=1.725in]{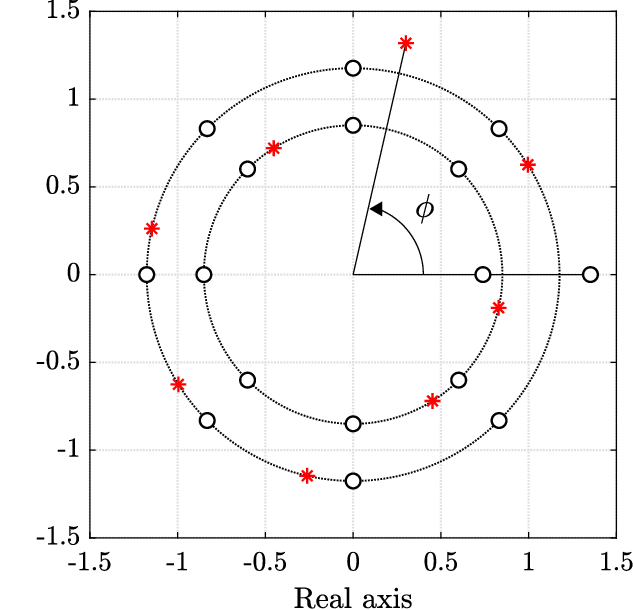}\label{subfig:jbmocz_rx}}~
		\subfloat[Corrected zeros.]{\includegraphics[width=1.725in]{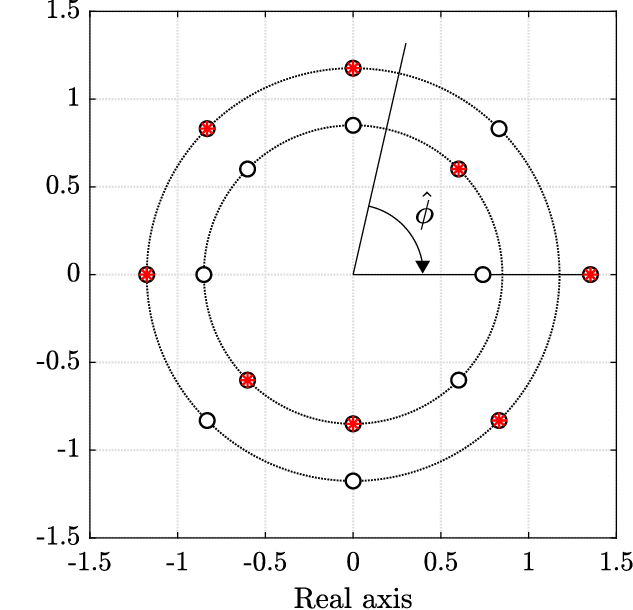}\label{subfig:jbmocz_corrected}}
		
		\caption{Proposed \ac{JBMOCZ} zero pattern for $\numZeros=8$, $\radius=1.176$, and $\asymFactor=1.15$. (a) Full zero constellation with $2\numZeros$ zero positions. (b) Transmitted zeros corresponding to the message $\bMessage=(1,0,1,1,1,0,0,1)$. (c) Received zeros rotated by $\po=(12/7)\baseAngle$ radians. (d) Received zeros after correcting $\po$ via~\eqref{eq:template_corr}, which exactly correspond to the transmitted message.}	
		\label{fig:jbmocz_rotation}
	\end{figure*}
	
	The high-level idea behind \ac{JBMOCZ} is simple: introduce \emph{asymmetry} to the Huffman \ac{BMOCZ} zero constellation to remove the ambiguity under zero rotation. Though there are many ways to inject asymmetry into the zero constellation, perhaps the simplest is to augment the magnitude of a single conjugate-reciprocal zero pair, which is what we consider with \ac{JBMOCZ}. Specifically, we maintain uniform phase separation of the zeros with $\zeroPhase_\zeroIndex=2\pi\zeroIndex/\numZeros$, $\zeroIndex\in[\numZeros]$, but we set
	\begin{equation} \label{eq:new_radius}
		\zeroMag_\zeroIndex = 
		\begin{cases}
			\asymFactor\radius, & \zeroIndex=0\\
			\radius, & \text{otherwise}
		\end{cases}\,, \ \ \zeroIndex\in[\numZeros]\,,
	\end{equation}  
	where $\asymFactor\geq1$ is an \emph{asymmetry factor}. With this encoding, each zero in~\eqref{eq:bmocz_encoder} lies on a circle of radius $\radius$ or $\radius^{-1}$, except for the zero corresponding to the first message bit, $\zero_0$, which is real and lies on a circle of radius $\asymFactor\radius\geq\radius$ or $(\asymFactor\radius)^{-1}\leq\radius^{-1}$. The asymmetry factor controls how far $\zero_0$ ``juts out." In the special case $\asymFactor=1$, $\zeroMag_\zeroIndex=\radius$ for all $\zeroIndex\in[\numZeros]$, and the proposed encoding reduces to Huffman \ac{BMOCZ}. An illustration with $\asymFactor>1$ is provided in Fig.~\ref{fig:jbmocz_rotation}. We discuss in Section~\ref{subsec:template_estimation} how we exploit the rotational asymmetry to estimate zero rotation at the receiver.  
	
	Barring the first zero pair $\crPair{0}=\{\zero_0,1/\overline{\zero_0}\}$, note that the \ac{JBMOCZ} and Huffman \ac{BMOCZ} zero constellations are identical. Hence, in light of~\eqref{eq:auto_poly}, we can express the \ac{AACF} for \ac{JBMOCZ} in terms of the Huffman \ac{BMOCZ} \ac{AACF} in~\eqref{eq:huffman_auto} by removing the factors $(z-\radius)$ and $(z-\radius^{-1})$ and replacing them with the factors $(z-\asymFactor\radius)$ and $(z-\asymFactor^{-1}\radius^{-1})$, respectively. This approach gives
	\begin{align}
		\Ajutted(z) = -\etaJ (\numZeros+1) &\frac{(z^\numZeros-\radius^\numZeros)(z^\numZeros-\radius^{-\numZeros})}{(z-\radius)(z-\radius^{-1})}\notag\\
		&\times (z-\asymFactor\radius)(z-\asymFactor^{-1}\radius^{-1})z^{-\numZeros}\,,
	\end{align} 
	where $\etaJ>0$ scales $\Ajutted(z)$ such that $a_0=\numZeros+1$~[cf.~\eqref{eq:auto_coeffs}]. To~identify $\etaJ$ in terms of $\numZeros$, $\radius$, and  $\asymFactor$, we define
	\begin{align}
		&\X^{(\radius,\asymFactor)}(z) \triangleq \frac{z^\numZeros-\radius^\numZeros}{z-\radius}(z-\asymFactor\radius)\notag\\
		&= - \asymFactor\radius^\numZeros + \sum_{\zeroIndex=1}^{\numZeros-1}(1-\asymFactor)\radius^{\numZeros-\zeroIndex}z^\zeroIndex + z^\numZeros = \sum_{\zeroIndex=0}^{\numZeros} c_\zeroIndex(\radius,\asymFactor)z^\zeroIndex\,,
	\end{align} 
	where 
	\begin{equation}
		c_\zeroIndex(\radius,\asymFactor) = 
		\begin{cases}
			-\asymFactor\radius^\numZeros, & \zeroIndex=0\\
			(1-\asymFactor)\radius^{\numZeros-\zeroIndex}, & \zeroIndex=1,2,\hdots,\numZeros-1\\
			1, & \zeroIndex=\numZeros
		\end{cases}\,.
	\end{equation}
	Then, $\Ajutted(z)=-\etaJ(\numZeros+1)\X^{(\radius,\asymFactor)}(z)\X^{(\radius^{-1},\asymFactor^{-1})}(z)z^{-\numZeros}$, and with~\eqref{eq:auto_coeffs} we get
	\begin{align} \label{eq:eta_inverse}
		&\frac{a_0}{\numZeros+1} = -\etaJ\sum_{\zeroIndex=0}^{\numZeros} c_\zeroIndex(\radius,\asymFactor) c_{\numZeros-\zeroIndex}({\radius^{-1},\asymFactor^{-1}})=1 \notag\\
		\iff&\etaJ^{-1}=-\sum_{\zeroIndex=0}^{\numZeros} c_\zeroIndex(\radius,\asymFactor) c_{\numZeros-\zeroIndex}({\radius^{-1},\asymFactor^{-1}})\,.
	\end{align}
	The terms for $\zeroIndex=0$ and $\zeroIndex=\numZeros$ in~\eqref{eq:eta_inverse} simplify to $-\zeta\radius^\numZeros$ and $-\asymFactor^{-1}\radius^{-\numZeros}$, respectively, while the middle $\numZeros-1$ terms are of the form $(1-\asymFactor)(1-\asymFactor^{-1})\radius^{\numZeros-2\zeroIndex}$, $\zeroIndex\in\{1,2,\hdots,\numZeros-1\}$. Invoking the identity
	\begin{equation}
		\sum_{\zeroIndex=1}^{\numZeros-1}\radius^{\numZeros-2\zeroIndex} = \frac{\radius^{\numZeros-1}-\radius^{-(\numZeros-1)}}{\radius-\radius^{-1}}\,,
	\end{equation}
	we then obtain
	\begin{equation} \label{eq:jutted_eta}
		\etaJ = \frac{1}{\asymFactor\radius^\numZeros + \asymFactor^{-1}\radius^{-\numZeros} - (1-\asymFactor)(1-\asymFactor^{-1}) \frac{\radius^{\numZeros-1}-\radius^{-(\numZeros-1)}}{\radius-\radius^{-1}}}\,.
	\end{equation}	
	Notice that for $\asymFactor=1$, $\etaJ=1/(\radius^\numZeros+\radius^{-\numZeros})=\etaH$, and $\Ajutted(z)$ reduces to $\Ahuffman(z)$ in~\eqref{eq:huffman_auto}. Additionally, evaluating $\Ajutted(z)$ along the unit circle $\{z=\e^{j\omHat} \ | \ \omHat\in[0,2\pi)\}$ gives
	\begin{equation} \label{eq:jutted_psd}
		\Ajutted(\e^{j\omHat}) = \frac{\etaJ}{\etaH} \frac{2\cos(\omHat) - (\asymFactor\radius+\asymFactor^{-1}\radius^{-1})}{2\cos(\omHat)-(\radius+\radius^{-1})} \Ahuffman(\e^{j\omHat})\,,				
	\end{equation}
	where $\Ahuffman(\e^{j\omHat})=(\numZeros+1)[1-2\etaH\cos(\numZeros\omHat)]$. The functions $\Ahuffman(\e^{j\omega})$ and $\Ajutted(\e^{j\omega})$ represent the instantaneous baseband signal power of \ac{FM}-based \ac{OFDM}-\ac{BMOCZ} symbols with Huffman \ac{BMOCZ} and \ac{JBMOCZ}, respectively; they are used in Section~\ref{subsec:papr} to derive expressions for the \ac{PAPR} in terms of the zero constellation parameters $\numZeros$, $\radius$, and $\asymFactor$.
	
	\subsection{Template-based Estimation of Zero Rotation} \label{subsec:template_estimation}
	
	From~\eqref{eq:auto_poly}, we have $\A(\e^{j\omega}) = \X(\e^{j\omega})\overline{\X(\e^{j\omega})} = |X(\e^{j\omega})|^2$. Recalling that $\A(z)$ is fixed, it follows that the magnitude $|\X(\e^{j\omega})|$ is identical for each codeword $\txSeq\in\polyCodebook$. Hence, define $\T(\omega)\triangleq|\X(\e^{j\omHat})|$, which we call a \emph{template transform} (or simply \emph{template}), as it corresponds to the magnitude of the \ac{IDTFT} of each $\txSeq\in\polyCodebook$, i.e., $\T(\omega)=|\sum_{\zeroIndex=0}^{\numZeros}x_\zeroIndex\e^{j\omega\zeroIndex}|$. Since $\T(\omega)$ is common to every codeword, it is known at the receiver and captures the expected ``shape" of $|\Y_\ofdmSymbolIndex(\e^{j\omHat})|$. However, under the uniform zero rotation in~\eqref{eq:zero_rotation}, the receiver observes a frequency-shifted version of $|\Y_\ofdmSymbolIndex(\e^{j\omHat})|$, namely, $|\Ymod_\ofdmSymbolIndex(\e^{j\omHat})|=|\Y_\ofdmSymbolIndex(\e^{j(\omHat-\po_\ofdmSymbolIndex)})|$. Thus, we estimate $\po_\ofdmSymbolIndex$ by correlating $\T(\omega)$ with $|\Ymod_\ofdmSymbolIndex(\e^{j\omHat})|$ as
	\begin{equation} \label{eq:template_corr}
		\poEst_\ofdmSymbolIndex = \arg \max_{\varphi\in[0,2\pi)} \int_{0}^{2\pi} \T(\omHat-\intVariable) |\Ymod_\ofdmSymbolIndex(\e^{j\omHat})| \, \mathrm{d}\omHat\,.
	\end{equation}
	Importantly, for the cross-correlation in~\eqref{eq:template_corr} to have a unique maximum, thus ensuring the estimate of $\po_\ofdmSymbolIndex$ is unambiguous, $\T(\omega)$ must be \emph{aperiodic} over $\omega\in[0,2\pi)$. Equivalently, it is required that the zero constellation be rotationally \emph{asymmetric}. We plot example template transforms for \ac{JBMOCZ} in Fig.~\ref{fig:templates}. As expected, for Huffman \ac{BMOCZ} (i.e., $\asymFactor=1$), the template is periodic with a period equal to the base angle $\baseAngle=2\pi/\numZeros$. For $\asymFactor>1$, the jutted zero introduces a peaky envelope to the sinusoidal Huffman \ac{BMOCZ} template that breaks its periodicity. Indeed, the template transform $\T(\omega)=|\Xjutted(\e^{j\omega})|$ for \ac{JBMOCZ} is well-suited for the cross-correlation in~\eqref{eq:template_corr}, as the product $\T(\omHat-\intVariable)|\Ymod_\ofdmSymbolIndex(\e^{j\omHat})|$ has a maximum when the shift $\intVariable$ aligns the peaks of $\T(\omHat-\intVariable)$ and $|\Ymod_\ofdmSymbolIndex(\e^{j\omHat})|$. Beyond shifting due to zero rotation, however, the received template $|\Ymod_\ofdmSymbolIndex(\e^{j\omHat})|$ is also distorted by fading and noise. For example, with single-carrier \ac{JBMOCZ}, the channel zeros introduced in~\eqref{eq:rx_poly} non-linearly distort $|\Ymod_\ofdmSymbolIndex(\e^{j\omHat})|$. Yet, we show in Section~\ref{subsec:comparison} that, with coding, the template-based estimator still yields good performance for single-carrier \ac{JBMOCZ} (see Fig.~\ref{fig:error_with_coding}). Similarly, for \ac{OFDM}-\ac{BMOCZ} with \ac{FM}, the received template is distorted by frequency-selective fading. The template-based implementation in Section~\ref{subsec:to_correction} thus relies on the narrowband assumption of Section~\ref{subsec:bmocz_ofdm}, or the use of blind channel estimation and equalization, as considered in Section~\ref{subsec:chest}.
	
	\begin{figure}[t!]
		\centering
		\includegraphics[width=3in]{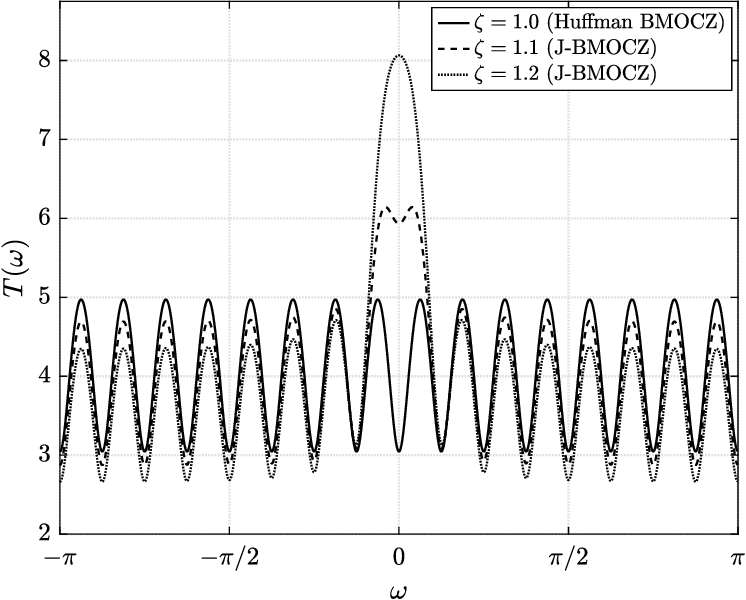}
		\caption{Example template transforms for $\numZeros=16$ and $\radius=1.093$, given by the square root of~\eqref{eq:jutted_psd}.}
		\label{fig:templates}
	\end{figure} 
	
	\subsection{Reliability Metric for Measuring Zero Stability} \label{subsec:stability}
	
	Zero stability is of paramount importance for \ac{BMOCZ}. If~the roots of the received polynomial shift significantly under additive noise perturbing the coefficients, then the receiver will struggle to identify the intended zero pattern. Such instability severely degrades performance with the \ac{dizet} decoder, since any confusion of the zeros induces bit errors. Ensuring the stability of each possible zero pattern is therefore essential to guaranteeing reliable communication. However, analyzing zero stability is challenging, as the relationship between a polynomial's coefficients and its roots is highly non-linear. While there exists relevant literature~\cite{farouki2007root,marden1949geometry}, conventional stability metrics, such as that derived from Rouch\'e's theorem in~\cite{walk2019principles}, only provide an upper bound on the root perturbation for a given noise variance. To the best of our knowledge, there is no general metric to measure the zero stability of~an \emph{arbitrary} polynomial, which we address next.
	
	Assume $\X(z)\in\C[z]$ is a polynomial of degree $\numZeros$ with the zeros $\{\zero_0,\zero_1,\hdots,\zero_{\numZeros-1}\}$ and the $\numZeros+1$ coefficients $\txSeq=[x_0,x_1,\hdots,x_\numZeros]\trans\in\C^{\numZeros+1}$, which satisfy $||\txSeq||_2^2=1$.\footnote{The energy of a polynomial's coefficients does not affect its zero locations. Hence, we assume the coefficients are normalized for simplicity.} Additionally, let $\W(z)\in\C[z]$ be a noise polynomial that perturbs $\X(z)$ as $\Y(z)=\X(z)+\W(z)$. We seek to quantify the stability of each zero $\zero_\zeroIndex$, $\zeroIndex\in[\numZeros]$. To that end, consider
	\begin{equation} \label{eq:noisy_channel}
		\Y(z) = (z-\zero_\zeroIndex)\Hspecial_\zeroIndex(z) + \W(z)\,,
	\end{equation} 
	where we define $\Hspecial_\zeroIndex(z)$ as 
	\begin{equation} \label{eq:channel_poly}
		\Hspecial_\zeroIndex(z) \triangleq \frac{\X(z)}{z-\zero_\zeroIndex} = x_\numZeros \prod_{\substack{\kprime=0\\\kprime\neq k}}^{\numZeros-1} (z-\zero_{\kprime})\,, \ \  \zeroIndex\in[\numZeros]\,.
	\end{equation}
	In particular, notice that~\eqref{eq:noisy_channel} resembles the form of~\eqref{eq:rx_poly}, which motivates our following interpretation: $\Hspecial_\zeroIndex(z)$ acts as a \emph{frequency-selective} channel formed by the neighboring zeros of $\zero_\zeroIndex$, i.e., we transmit $(z-\zero_\zeroIndex)$ in~\eqref{eq:noisy_channel}, and the receiver observes $\Y(z)$. With this interpretation, the capacity of $\Hspecial_\zeroIndex(z)$ corresponds to the \emph{reliability} of $\zero_\zeroIndex$. Towards computing this capacity, let us probe the channel $\Hspecial_\zeroIndex(z)$ with a unit input such that~\eqref{eq:noisy_channel} simplifies to $\Ymod(z)=\Hspecial_\zeroIndex(z)+W(z)$. Then, analogous to \ac{OFDM}, we parallelize $\Hspecial_\zeroIndex(z)$ across frequency by evaluating it at the $\idftSize$th roots of unity, which yields the samples $\Ymod(\e^{j2\pi\idftIndex/\idftSize}) = \Hspecial_\zeroIndex(\e^{j2\pi\idftIndex/\idftSize})+W(\e^{j2\pi\idftIndex/\idftSize})$, $\idftIndex\in[\idftSize]$. Although $\W(z)$ may be arbitrary, here we assume $\W(\e^{j2\pi\idftIndex/\idftSize})\sim\complexnormal(0,1)$, independently across $\idftIndex\in[\idftSize]$. With these simplifications, the transformed channel reduces~to a \emph{parallel AWGN channel}, where each sub-channel has unit noise variance and an input of unit power. The capacity of such a channel is given by \cite[Eq.~(5.38)]{tse2005fundamentals}
		\begin{equation} \label{eq:zero_capacity}
			\capacity_\zeroIndex = \sum_{\idftIndex=0}^{\idftSize-1} \log_2 \left( 1 + \left|\Hspecial_\zeroIndex(\e^{j2\pi\frac{\idftIndex}{\idftSize}})\right|^2 \right)\,, \ \ \zeroIndex\in[\numZeros]\,,
		\end{equation}
		which we scale by $1/\idftSize$ as $\estCapacity_\zeroIndex=\capacity_\zeroIndex/\idftSize$ and assign as a metric for the stability of $\zero_\zeroIndex$. To measure the total stability of~$\X(z)$, taking into account all $\numZeros$ of its zeros, we simply average over $\estCapacity_\zeroIndex$ for $\zeroIndex\in[\numZeros]$ as
	\begin{equation} \label{eq:seq_capacity}
		\capacity(\txSeq) = \frac{1}{\numZeros} \sum_{\zeroIndex=0}^{\numZeros-1} \estCapacity_\zeroIndex\,.
	\end{equation}
	We use the notation $\capacity(\txSeq)$ to emphasize that~\eqref{eq:seq_capacity} quantifies the zero stability of the polynomial $\X(z)$, defined by its (normalized) coefficients $\txSeq\in\C^{\numZeros+1}$. This metric naturally extends to the codebook level, where we estimate stability by averaging over $\capacity(\txSeq)$ for the $2^\numZeros$ codewords $\txSeq\in\polyCodebook$ as
	\begin{equation} \label{eq:mean_capacity}
		\meanCapacity(\polyCodebook) = \frac{1}{2^\numZeros} \sum_{\txSeq\in\polyCodebook} \capacity(\txSeq)\,.
	\end{equation}
	Since $|\polyCodebook|$ scales exponentially with $\numZeros$, computing $\meanCapacity(\polyCodebook)$ via~\eqref{eq:mean_capacity} becomes computationally intractable for $\numZeros\gg1$. However, an approximation of $\meanCapacity(\polyCodebook)$ is readily obtained by averaging over a random subset of codewords. 
	
	We now illustrate our proposed reliability metric with several examples. In each example, we set $\idftSize=1024$ and normalize $||\txSeq||_2^2$ to one.
	
	\begin{example} \label{ex:e1}
		The Huffman \ac{BMOCZ} codebook with $\numZeros=8$ and $\radius=1.176$ has a stability $\meanCapacity(\polyCodebook_{\text{H}})\approx1.149$. The most stable of such Huffman polynomials has all zeros of magnitude $\radius^{-1}$ and a reliability $\capacity(\txSeq_{\radius^{-1}})=1.250$, while the least stable has all zeros of magnitude $R$ and a reliability $\capacity(\txSeq_\radius)=1.048$.
	\end{example}
	
	\begin{example} \label{ex:e2}
		Wilkinson's polynomial~\cite{wilkinson1984perfidious}, which is given by $\Xwilk(z)=\prod_{\zeroIndex=1}^{20}(z-\zeroIndex)$, famously exhibits instability under noise. If the coefficient $x_{19}$ is decreased by just $2^{-23}$, then the zero structure of the polynomial is destroyed. We compute $\capacity(\txSeq_\mathrm{W})=0.0381$ with~\eqref{eq:seq_capacity}, a reliability \emph{32 times less} than the least stable Huffman polynomial with $\numZeros=20$ and $\radius=1.075$.   
	\end{example}  
	
	\begin{example} \label{ex:e3}
		In Fig.~\ref{fig:stability_example}, we plot the zeros of a \ac{JBMOCZ} polynomial under \ac{AWGN} for $\numZeros=8$, $\radius=1.176$, $\asymFactor=1.15$. Each zero is shaded according to its reliability $\estCapacity_\zeroIndex$, where a darker shade corresponds to lower reliability  (less stability). Observe that the jutted zero is the least stable and that the reliability of each root correlates with the area of its perturbation ``cloud." The stability of the \ac{JBMOCZ} codebook with these constellation parameters is $\meanCapacity(\polyCodebook_{\text{J}})\approx1.123$.
	\end{example}
	
	\begin{figure}[t!]
		\centering
		\includegraphics[width=3.25in]{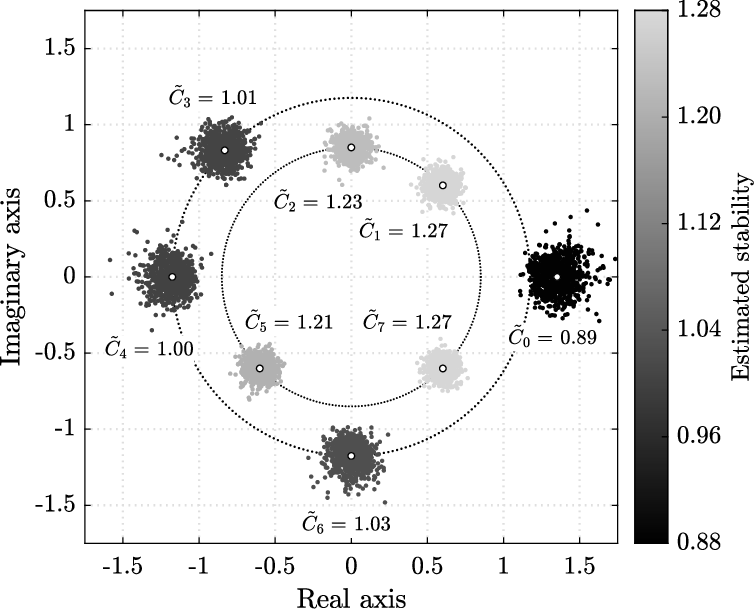}
		\caption{\ac{JBMOCZ} zero perturbation at $\ebno=10$~dB with $\numZeros=8$, $\radius=1.176$, and $\asymFactor=1.15$. The zeros are shaded according to their stability estimated via~\eqref{eq:zero_capacity} as $\estCapacity_\zeroIndex=\capacity_\zeroIndex/\idftSize$, where $\idftSize=1024$.}
		\label{fig:stability_example}
	\end{figure}
	
	\subsection{Optimization of Zero Constellation Parameters} \label{subsec:optimization}
	
	In this section, we attempt to answer the following question: \emph{for a given $\numZeros$, what are the optimal $\radius$ and $\asymFactor$ for \ac{JBMOCZ}?} On one hand, $\asymFactor$ should be small (i.e., $\asymFactor\approx1$) to retain the numerical stability of Huffman \ac{BMOCZ}. On the other hand, if $\asymFactor$ is too small, $\T(\omega)$ does not have a prominent peak, which can lead to erroneous estimates of zero rotation with~\eqref{eq:template_corr}. Hence, there exists a trade-off between zero stability and template ``peakiness" (or, equivalently, rotational asymmetry).
	
	We first introduce a method to choose $\radius>1$ for fixed $\numZeros$ and $\asymFactor$, which is as follows: select the radius $\radiusStar(\numZeros,\asymFactor)$ maximizing the minimum stability in the codebook, i.e.,
	\begin{equation} \label{eq:optimal_radius}
		\radiusStar(\numZeros,\asymFactor) = \arg \max_{\radius>1} \min_{\txSeq\in\polyCodebook} \capacity(\txSeq) \,.
	\end{equation}
	This approach is illustrated in Fig.~\ref{fig:parameter_sweep}(a), which plots the minimum stability $\min_{\txSeq\in\polyCodebook} \capacity(\txSeq)$ against $\radius$ for \ac{JBMOCZ} with $\numZeros=128$ and $\asymFactor\in\{1,1.03,1.06\}$; for reference, we~have annotated $\radiusStar(\numZeros,\asymFactor)$ from~\eqref{eq:optimal_radius}. Notice that for $\asymFactor=1$, $\radiusStar(128,1)=1.015$, which is nearly identical to the conventional radius for Huffman \ac{BMOCZ} with $\numZeros=128$, namely, $\radius=\sqrt{1+\sin(\pi/128)}=1.012$~\cite{walk2019principles,walk2020practical,perre2025learning}. Furthermore, we remark that for both Huffman \ac{BMOCZ} and \ac{JBMOCZ}, the polynomial with all zeros outside (inside) the unit circle is consistently the least (most) stable under~\eqref{eq:seq_capacity}. Of course, prior to calculating $\radiusStar(\numZeros,\asymFactor)$, we must first select $\asymFactor\geq1$ given~$\numZeros$. To~that end, we sweep $\asymFactor$ and record $\min_{\txSeq\in\polyCodebook} \capacity(\txSeq)$ and template \ac{PAPR} [cf.~\eqref{eq:papr_def}]. For each $\asymFactor$, we consider the radius $\radiusStar(\numZeros,\asymFactor)$. Fig.~\ref{fig:parameter_sweep}(b) illustrates this method, where the zero stability is normalized with respect to Huffman \ac{BMOCZ} (i.e., $\asymFactor=1$). The figure clearly depicts the aforesaid trade-off: increasing $\asymFactor$ increases template \ac{PAPR}, but it \emph{strictly~reduces} zero stability. The utility of Fig.~\ref{fig:parameter_sweep} is that it enables one to select $\asymFactor$ for a target template \ac{PAPR} while, at the same time, understanding exactly how much stability is lost relative to Huffman \ac{BMOCZ}. 
	
	\begin{remark} \label{remark:optimization}
		Our presented methods for tuning the parameters of the zero constellation generalize to other \ac{MOCZ} schemes. In fact, given $\numZeros$, one could pose a constrained optimization problem and attempt to identify the zero constellation that maximizes~\eqref{eq:mean_capacity} or $\min_{\txSeq\in\polyCodebook} \capacity(\txSeq)$. However, an optimization of this nature is beyond the scope of this work.  
	\end{remark}
	
	\begin{figure}[t!]
		\centering
		
		\subfloat[Zero stability against radius for $\numZeros=128$.]{\includegraphics[width=3in]{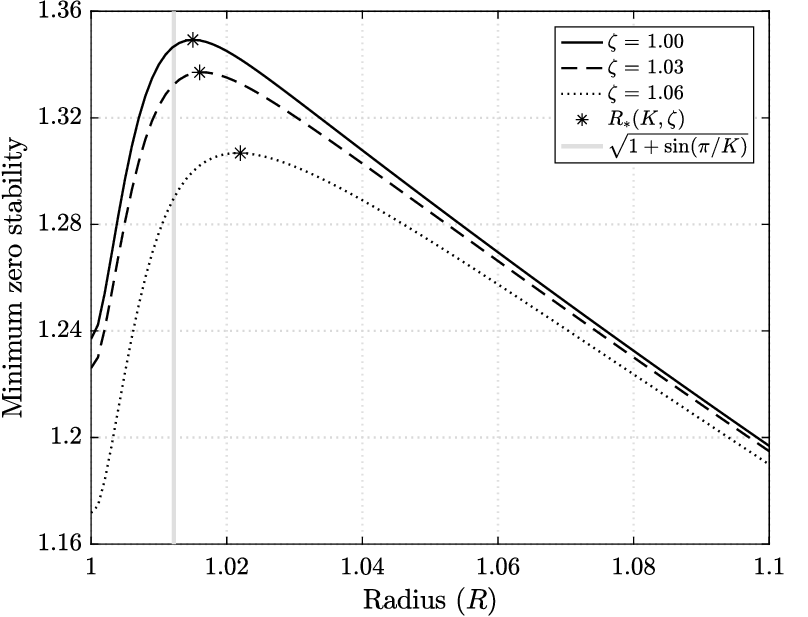}\label{subfig:radius_sweep}}\\
		\subfloat[Zero stability against rotational asymmetry.]{\includegraphics[width=3in]{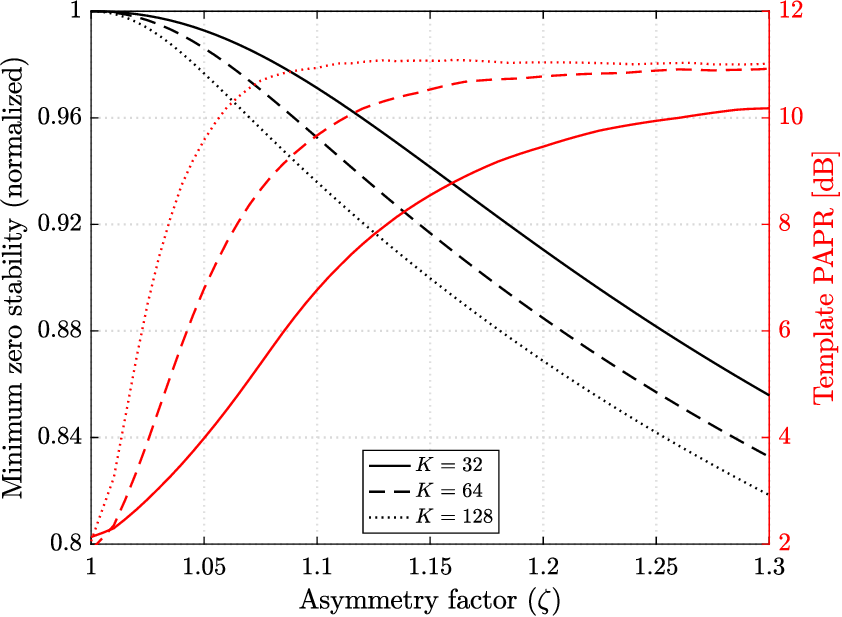}\label{subfig:zeta_sweep}}
		
		\caption{Design curves for \ac{JBMOCZ} parameter selection showing the minimum zero stability identified via~\eqref{eq:seq_capacity} as a function of the radius and asymmetry factor.}	
		\label{fig:parameter_sweep}
	\end{figure}
	
	\section{Pilot-free OFDM Framework} \label{sec:framework}
	
	\begin{figure*}[t!]
		\centering
		\includegraphics[width=6in]{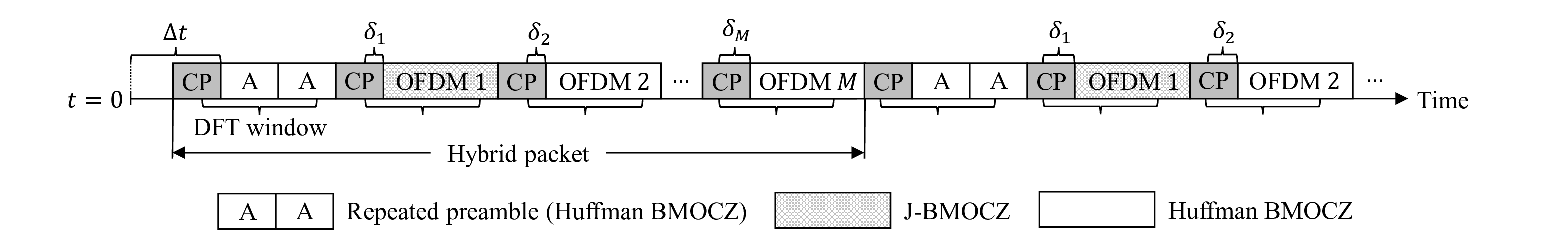}
		\caption{Proposed hybrid \ac{OFDM}-\ac{BMOCZ} waveform with repeated Huffman \ac{BMOCZ} preamble for coarse time synchronization and \ac{CFO} estimation, \ac{JBMOCZ}~symbol for residual \ac{TO} estimation, and Huffman \ac{BMOCZ} payload. Observe that, without correction, the residual \ac{TO} (and hence the \ac{DFT}~window) evolves with the \ac{OFDM} symbol index.}
		\label{fig:ofdm_diagram}
	\end{figure*}
	
	In this section, we combine the concepts introduced in the previous sections to propose an end-to-end, non-coherent \ac{OFDM}-\ac{BMOCZ} framework. Our approach uses a two-stage timing procedure~\cite{lee1997new,yang2000timing,minn2003robust} to address radio impairments at the physical layer while remaining free of pilot symbols.
	
	\subsection{Hybrid Packet Structure} \label{subsec:hybrid}
	
	Consider a payload of $\numPayloadBits$~bits to be transmitted using an \ac{OFDM}-\ac{BMOCZ} waveform. First, we subdivide the bits into $\numPolys=\lceil\numPayloadBits/\numBits\rceil$~blocks, where $\numBits\leq\numZeros$ denotes the number of bits encoded per polynomial,\footnote{If $\numBits$ does not divide $\numPayloadBits$, then $\numPad=B-(\numPayloadBits\bmod\numBits)$~bits must be padded to the payload.} and we allocate $\numSubcarriers=\numZeros+1$ subcarriers and $\numOFDMsymbols=\numPolys$ \ac{OFDM} symbols for the payload. We utilize \ac{JBMOCZ} for the first \ac{OFDM} symbol to estimate the residual \ac{TO} at the receiver, while the remaining $\numOFDMsymbols-1$ \ac{OFDM} symbols employ Huffman \ac{BMOCZ} for its superior zero stability and low \ac{PAPR}. For coarse synchronization and \ac{CFO} estimation, we adopt the Schmidl-Cox technique~\cite{schmidl1997robust} and~prepend a repeated Huffman \ac{BMOCZ} preamble to the payload that encodes a header of $\lfloor\numZeros/2\rfloor$~bits (uncoded). A waveform with the proposed hybrid packet structure is depicted in Fig.~\ref{fig:ofdm_diagram}. The waveform comprises back-to-back hybrid packets, each having a repeated Huffman \ac{BMOCZ} preamble and a \ac{JBMOCZ} symbol to refresh estimates of the \ac{CFO} and residual \ac{TO}, respectively. In practice, refreshing these estimates is only necessary for long frames and not for short packets, but we illustrate it here for completeness. Further, we consider the addition of a \ac{TM}-based preamble for blind channel estimation in Section~\ref{subsec:chest}.  
	
	\subsection{Coarse Synchronization and CFO Estimation} \label{subsec:synch}
	
	Let $\txSignal_\sync(t)$ denote the synchronization symbol prepended to the payload. We choose the \emph{synchronization polynomial} $\X_\sync(z)$ as a Huffman polynomial with the $\Kprime \triangleq\lfloor \numZeros/2\rfloor$ zeros $\bZero_\text{sync}\in\C^{\Kprime}$, which can encode a header of $\Kprime$~bits. The $\kprime$th coefficient of $\X_\sync(z)$, denoted by $x_{\kprime,\sync}$, defines the data symbol $\dataSymbol_{\subcarrierIndex,\sync}$ in~\eqref{eq:mth_tx_symbol} as 
	\begin{equation} \label{eq:sync_symbols}
		d_{\subcarrierIndex,\sync} = 
		\begin{cases}
			x_{\subcarrierIndex/2,\sync}, & \subcarrierIndex \ \text{even}\\
			0, & \text{otherwise}
		\end{cases}\,, \ \ \subcarrierIndex \in [\numSubcarriers]\,.
	\end{equation}
	The coefficients of $X_\sync(z)$ are mapped to every other subcarrier so that, after the \ac{IDFT} operation in~\eqref{eq:mth_tx_symbol}, $\txSignal_\sync(t)$ exhibits a half-symbol repetition, i.e., $\txSignal_\sync(t)=\txSignal_\sync(t+\symbolDuration/2)$ for $t\in[-\cpDuration,\symbolDuration/2)$.
	
	At the receiver, we define $\Nprime \triangleq \idftSize/2$ and search for the repeated preamble using the normalized correlation metric~\cite{schmidl1997robust} $\normCorr_\generalIndex = \sumCorr_\generalIndex/\energyCorr_\generalIndex$, where $\sumCorr_\generalIndex$ accumulates correlated samples as
	\begin{equation} \label{eq:sum_corr}
		\sumCorr_\generalIndex = \sum_{\nprime=0}^{\Nprime-1} \left \{ \rxSignal([\generalIndex+\nprime]/\sampleRate) \times \overline{\rxSignal([\generalIndex+\nprime+\Nprime]/\sampleRate)} \right \}\,,
	\end{equation}
	and $\energyCorr_\generalIndex$ accumulates symbol energy as
	\begin{equation} \label{eq:energy_corr}
		\energyCorr_\generalIndex = \sum_{\nprime=0}^{\Nprime-1} \left | \rxSignal([\generalIndex+\nprime+\Nprime]/\sampleRate) \right |^2\,.
	\end{equation}
	We then obtain an estimate of the \ac{TO} (i.e., $\timingOffset$) by identifying the largest candidate offset $\generalIndex\in\N$ (in samples) for which $\normCorr_\generalIndex$ is within a set fraction $\percentage\in(0,1]$ of $\normCorr_\text{max} \triangleq \max_\generalIndex \normCorr_\generalIndex$ as 
	\begin{equation}
		\timingOffsetEst = \frac{\syncSample}{\sampleRate} \ \ \text{with} \ \ \syncSample = \max \{\generalIndex\in\N \ | \ \normCorr_\generalIndex \geq \percentage  \normCorr_\text{max}\}\,.
	\end{equation}
	Additionally, since the samples correlated in~\eqref{eq:sum_corr} vary by a phase term proportional to the \ac{CFO}~\cite{schmidl1997robust,moose1994technique}, we use  $\sumCorr_{\syncSample}$ to estimate $\cfo$ as
	\begin{equation}
		\cfoEst = -\frac{\Arg(\sumCorr_{\syncSample})}{2\pi\Nprime} \sampleRate\,,
	\end{equation}
	where we assume that $\cfo<\subcarrierSpacing$. After correcting the \ac{CFO} and the residual \ac{TO} $\residualOffset_0$ (see Section~\ref{subsec:to_correction}), the header data can be recovered from the received synchronization polynomial $\Y_\sync(z) = \sum_{\kprime=0}^{\Kprime}y_{\kprime,\sync}z^{\kprime}$ via the \ac{dizet} decoder.
	
	\subsection{Implementation of Residual TO Estimation} \label{subsec:to_correction}
	
	For residual \ac{TO} estimation, we implement~\eqref{eq:template_corr} using the received samples of the first \ac{OFDM} symbol in the payload. Hence, we set $\X_0(z)$ as a \ac{JBMOCZ} polynomial such that the receiver observes $\rxTemplate\in\C^\idftSize$, which comprises the noisy samples $\tilde{v}_\idftIndex = |\rxSignalMod_0(\idftIndex/\sampleRate)| = |\Y_0(\e^{j(2\pi\idftIndex/\idftSize-\po_0)})|$, $\idftIndex\in[\idftSize]$. To~approximate the continuous search over $\intVariable\in[0,2\pi)$ in~\eqref{eq:template_corr}, we quantize the interval $[0,2\pi)$ into $\idftSize$ uniform bins with the Toeplitz matrix $\templateMat\in\R^{N \times N}$ defined as 
	\begin{equation}
		\templateMat \triangleq 
		\begin{bmatrix}
			\T(\omHat_0) & \T(\omHat_{\idftSize-1}) & \cdots & \T(\omHat_1)\\
			\T(\omHat_1) & \T(\omHat_0) & \cdots & \T(\omHat_2)\\
			\vdots & \vdots & \ddots & \vdots\\
			\T(\omHat_{\idftSize-1}) & \T(\omHat_{\idftSize-2}) & \cdots & \T(\omHat_0)
		\end{bmatrix}\,,
	\end{equation} 
	where $\omHat_\idftIndex = 2\pi\idftIndex/\idftSize$ for $n\in[N]$. The residual \ac{TO} (i.e., $\residualOffset_0$) is~then estimated from the inner product (correlation) of $\rxTemplate$ with the columns of $\templateMat$ as 
	\begin{equation} \label{eq:corr_implementation}
		\residualOffsetEst_0 = \frac{\idftIndexHat}{\sampleRate} \ \ \text{with} \ \ \idftIndexHat = \arg \max_{\idftIndex\in[\idftSize]} (\rxTemplate\trans \templateMat)_\idftIndex\,.
	\end{equation}
	In~\eqref{eq:corr_implementation}, the resolution of $\residualOffsetEst_0$ is limited by the \ac{IDFT} size $\idftSize$. However, as we show in Section~\ref{subsec:comparison}, the performance loss for small $\idftSize$ is not large, especially at low $\ebno$ (see Fig.~\ref{fig:rotation_mse}).  
	
	\subsection{PAPR Analysis with FM} \label{subsec:papr}
	
	For \ac{OFDM}-\ac{BMOCZ} with \ac{FM}, recall that an \ac{OFDM} symbol $s(t)$ is given by the corresponding polynomial $\X(z)$ evaluated on the unit circle, i.e., $s(t) = \X(\e^{j2\pi\subcarrierSpacing t})$. In particular, $\A(\e^{j2\pi\subcarrierSpacing t}) = |\X(\e^{j2\pi\subcarrierSpacing t})|^2 = |s(t)|^2$, and hence the baseband signal power at time $t\in[0,\symbolDuration)$ can be expressed directly in terms of the \ac{AACF} $\A(z)$. Indeed, $\subcarrierSpacing=1/\symbolDuration$ implies $\{\e^{j2\pi\subcarrierSpacing t} \ | \ t\in[0,\symbolDuration)\} = \{\e^{j\omega} \ | \ \omega\in[0,2\pi)\}$, so $\A(\e^{j\omega})$ completely characterizes the instantaneous power at baseband. Here we exploit this fact to derive expressions for the \ac{PAPR} of \ac{OFDM}-\ac{BMOCZ} symbols, defined for a given polynomial $\X(z)$ of degree $\numZeros\geq2$ as
	\begin{equation} \label{eq:papr_def}
		\PAPR\{\X(\e^{j\omega})\} \triangleq \frac{\max_{\omega\in[0,2\pi)}|\X(\e^{j\omega})|^2}{\E[|\X(\e^{j\omega})|^2]}\,.
	\end{equation} 
	Furthermore, since $\A(z)=|\X(z)|^2$ is fixed for \ac{BMOCZ} and the coefficients are normalized in energy such that $\E[|\X(\e^{j\omega})|^2]=\numZeros+1$, the \ac{PAPR} with \ac{FM} is \emph{independent} of the message. This contrasts with \ac{TM}, where the \ac{PAPR} is a function of the data, much like conventional \ac{OFDM} systems. We refer the reader to~\cite[Section~IV-C]{huggins2024optimal} for a brief analysis on the \ac{PAPR} with \ac{TM}.
	
	We begin with Huffman \ac{BMOCZ}, where from~\eqref{eq:huffman_auto} we get $\Ahuffman(\e^{j\omHat})=(\numZeros+1)[1-2\etaH\cos(\numZeros\omHat)]$. Since $\cos(\numZeros\omega)$ is periodic and achieves local minima at odd multiples of $\pi/\numZeros$, it follows that $\max_{\omega\in[0,2\pi)}\Ahuffman(\e^{j\omega})=(\numZeros+1)(1+2\etaH)$ and
	\begin{equation} \label{eq:huffman_papr}
		\PAPR\{\Xhuffman(\e^{j\omega})\} = 1 + 2\etaH\,.
	\end{equation}
	In fact, for $\numZeros\geq2$ and $\radius>1$, it holds that $\etaH\in(0,1/2)$, which implies $\PAPR\{\Xhuffman(\e^{j\omega})\}<2$, i.e., the \ac{PAPR} with Huffman \ac{BMOCZ} is bounded above by $10\log_{10}(2)\approx3$~dB. We now consider \ac{JBMOCZ} with $\Ajutted(\e^{j\omega})$ given in~\eqref{eq:jutted_psd}. Here the analysis is less straightforward, as $\Ajutted(\e^{j\omega})$ is aperiodic on $\omega\in[0,2\pi)$, and the existence of a global maximum at $\omega=0$ depends jointly on the parameters $\numZeros$, $\radius$, and $\asymFactor$ (see Fig.~\ref{fig:templates}). Hence, we utilize the below proposition.
	\begin{proposition} \label{prop:max_condition}
		Let $\numZeros\geq2$ and $\radius,\asymFactor>1$. A sufficient condition for $\argmax_{\omega\in[0,2\pi)}\Ajutted(\e^{j\omega})=0$ is
		\begin{equation} \label{eq:sufficient_condition}
			\numZeros^2 < \frac{1}{\etaH}\frac{(a-b)(1-2\etaH)}{[a-2\cos(\pi/\numZeros)][b-2\cos(\pi/\numZeros)]}\,,
		\end{equation}
		where $a\triangleq\asymFactor\radius+\asymFactor^{-1}\radius^{-1}$ and $b\triangleq\radius+\radius^{-1}$.
	\end{proposition}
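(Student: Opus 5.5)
The plan is to factor $\Ajutted(\e^{j\omHat})$ as in~\eqref{eq:jutted_psd} and show directly that $\Ajutted(1)\ge\Ajutted(\e^{j\omHat})$ for every $\omHat$. The factors are
\[
g(\omHat)\triangleq\frac{a-2\cos\omHat}{b-2\cos\omHat}=1+\frac{a-b}{b-2\cos\omHat},\qquad h(\omHat)\triangleq1-2\etaH\cos(\numZeros\omHat),
\]
so that $\Ajutted(\e^{j\omHat})=(\numZeros+1)\frac{\etaJ}{\etaH}g(\omHat)h(\omHat)$. The positive constant plays no role.

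First I record the elementary facts:
\begin{itemize}
\item $a-b=(\asymFactor-1)(\radius-\asymFactor^{-1}\radius^{-1})>0$ and $b>2$, so $g>0$.
\item $g$ is even and strictly decreasing in $|\omHat|$ on $[0,\pi]$, with
\[
g(0)-g(\omHat)=\frac{2(a-b)(1-\cos\omHat)}{(b-2)(b-2\cos\omHat)} .
\]
\item $h(\omHat)=h(0)+2\etaH(1-\cos\numZeros\omHat)$ with $h(0)=1-2\etaH>0$.
\end{itemize}
Writing $g(\omHat)h(\omHat)=g(\omHat)h(0)+2\etaH g(\omHat)(1-\cos\numZeros\omHat)$, the claim $g(0)h(0)\ge g(\omHat)h(\omHat)$ is equivalent to
\[
(g(0)-g(\omHat))(1-2\etaH)\;\ge\;2\etaH\, g(\omHat)\,(1-\cos\numZeros\omHat). \qquad (\ast)
\]
By symmetry it suffices to take $\omHat\in(0,\pi]$. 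I split into two cases at $\omHat=\pi/\numZeros$. Throughout, let $\Delta\triangleq b-2\cos(\pi/\numZeros)$; since $b-2<\Delta$, the hypothesis~\eqref{eq:sufficient_condition} implies the weaker inequality
\[
\etaH\numZeros^2(a-2\cos(\pi/\numZeros))(b-2)<(a-b)(1-2\etaH). \qquad (\ast\ast)
\]

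\emph{Near region, $0<\omHat\le\pi/\numZeros$.} I use $1-\cos\numZeros\omHat\le\numZeros^2(1-\cos\omHat)$, which follows from $|\sin(\numZeros x)|\le\numZeros|\sin x|$ with $x=\omHat/2$. Substituting the formula for $g(0)-g(\omHat)$, the common factors $2(1-\cos\omHat)/(b-2\cos\omHat)$ cancel. It then suffices that
\[
(a-b)(1-2\etaH)\ge\etaH\numZeros^2(b-2)(a-2\cos\omHat).
\]
Since $\cos\omHat\ge\cos(\pi/\numZeros)$ here, we have $a-2\cos\omHat\le a-2\cos(\pi/\numZeros)$, so this follows from $(\ast\ast)$, strictly.

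\emph{Far region, $\pi/\numZeros\le\omHat\le\pi$.} I bound $1-\cos\numZeros\omHat\le2$. Monotonicity of $g$ gives $g(\omHat)\le g(\pi/\numZeros)$ and $g(0)-g(\omHat)\ge g(0)-g(\pi/\numZeros)$. Then $(\ast)$ reduces to
\[
(a-b)(1-\cos(\pi/\numZeros))(1-2\etaH)\ge2\etaH(a-2\cos(\pi/\numZeros))(b-2).
\]
Using $1-\cos x\ge2x^2/\pi^2$ for $x\in[0,\pi]$ (concavity of $\sin$ on $[0,\pi/2]$), we get $1-\cos(\pi/\numZeros)\ge2/\numZeros^2$. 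The display then again follows from $(\ast\ast)$, strictly.

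Combining the two regions, $\omHat=0$ is the unique maximizer, i.e., $\argmax_{\omHat}\Ajutted(\e^{j\omHat})=0$.

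The main obstacle is the choice of these two inequalities. I need $1-\cos\numZeros\omHat\le\numZeros^2(1-\cos\omHat)$ near the origin and $1-\cos(\pi/\numZeros)\ge2/\numZeros^2$ away from it, and both must land on the same $\numZeros^2$-type condition. I will verify them carefully, and also confirm that the extra slack $(b-2)<(b-2\cos(\pi/\numZeros))$ goes in the direction needed. If it does, the stated condition~\eqref{eq:sufficient_condition} is simply a slightly stronger, cleaner form of $(\ast\ast)$.
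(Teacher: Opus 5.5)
Your proof is correct, and it takes a genuinely different route from the paper's. Your $g,h$ are the paper's $E,S$, and the paper also splits $[0,\pi)$ at $\pi/K$, but it argues through the derivative:
\begin{itemize}
\item On $[\pi/K,\pi)$ it uses $S(\omega)\le 1+2\etaH=S(\pi/K)$ and the fact that $E$ is decreasing, so $F(\omega)\le F(\pi/K)$ with no hypothesis at all.
\item On $(0,\pi/K)$ it shows $F'<0$, rewritten as $S'/S<-E'/E$. It bounds $\sin(K\omega)<K\sin\omega$ and $1-2\etaH\cos(K\omega)>1-2\etaH$, and replaces \emph{both} $a-2\cos\omega$ and $b-2\cos\omega$ by their values at $\pi/K$. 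That last step is where the factor $b-2\cos(\pi/K)$ in~\eqref{eq:sufficient_condition} comes from.
\end{itemize}

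You instead compare function values directly through $(\ast)$. The exact difference $g(0)-g(\omega)$ carries the factor $1/(b-2\cos\omega)$, which cancels against the same factor in $g(\omega)$, so only $a-2\cos\omega$ needs bounding. As a result, your argument proves the conclusion under the weaker hypothesis $(\ast\ast)$, with $b-2$ in place of $b-2\cos(\pi/K)$. This is a slightly stronger proposition, and the stated one follows because $b-2<b-2\cos(\pi/K)$. You also avoid derivatives entirely, using $1-\cos(K\omega)\le K^2(1-\cos\omega)$ as the value-level analogue of the paper's $\sin(K\omega)<K\sin\omega$.

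What the paper's route buys in return is structural information: under the hypothesis, $\Ajutted(\e^{j\omega})$ is strictly decreasing on $[0,\pi/K]$, and its far-region step needs no assumption. You could borrow that step to shorten your proof. Your near-region case already includes $\omega=\pi/K$, so for $\omega\in[\pi/K,\pi]$ you have $g(\omega)h(\omega)\le g(\pi/K)(1+2\etaH)=g(\pi/K)h(\pi/K)<g(0)h(0)$. That makes the separate far-region estimate via $1-\cos(K\omega)\le 2$ and Jordan's inequality unnecessary.
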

	\noindent
	The proof is given in Appendix~\ref{app:prop_proof}. For a triple $(\numZeros,\radius,\asymFactor)$ satisfying~\eqref{eq:sufficient_condition}, the \ac{PAPR} with \ac{JBMOCZ} follows readily as 
	\begin{equation} \label{eq:template_papr}
		\PAPR\{\Xjutted(\e^{j\omega})\} = \frac{\etaJ}{\etaH}\frac{a-2}{b-2}(1-2\etaH)\,,
	\end{equation}
	since $2<b<a$ implies $(2-a)/(2-b)=(a-2)/(b-2)$. For reference, we plot the \ac{PAPR} with \ac{JBMOCZ} in Fig.~\ref{fig:papr_space} as a function of $\numZeros$ and $\radius$ for various $\zeta$. Fig.~\ref{fig:papr_space}(a) shows the low \ac{PAPR} afforded by Huffman \ac{BMOCZ} (i.e., $\asymFactor=1$), while Fig.~\ref{fig:papr_space}(b) and Fig.~\ref{fig:papr_space}(c) show that for large $\numZeros$ and small $\radius$, the \ac{PAPR} with \ac{JBMOCZ} can exceed $10$~dB, especially for increasing $\asymFactor$. In Section~\ref{sec:results}, however, we demonstrate good performance with a template \ac{PAPR} less than $9$~dB. 
	
	\begin{remark} \label{remark:angle_coding}
		For fixed $\numZeros$, the \ac{BMOCZ} codebook $\polyCodebook$ consists of the $2^\numZeros$ non-trivial ambiguities $\txSeq\in\C^{\numZeros+1}$ sharing the common \ac{AACF} $\A(z)$~\cite{walk2019principles}. In much the same way, with~\ac{FM}, the \ac{OFDM}-\ac{BMOCZ} signal set comprises the $2^\numZeros$ non-trivially distinct signals $s(t)$ [cf.~\eqref{eq:mth_tx_symbol}] with common envelope $|s(t)|$. {\reviewColor From a time-domain perspective, then, \ac{OFDM}-\ac{BMOCZ} is an implementation of \emph{angle coding}, which was first described by Voelcker in~\cite{voelckertowardp1,voelckertowardp2}. With angle coding, the transmitted waveform has a predetermined envelope, and information is conveyed entirely through its phase trajectory.}
	\end{remark}
	
	\begin{figure*}[t!]
		\centering
		
		\subfloat[$\asymFactor=1.00$ (Huffman \ac{BMOCZ}).]{\includegraphics[width=2.35in]{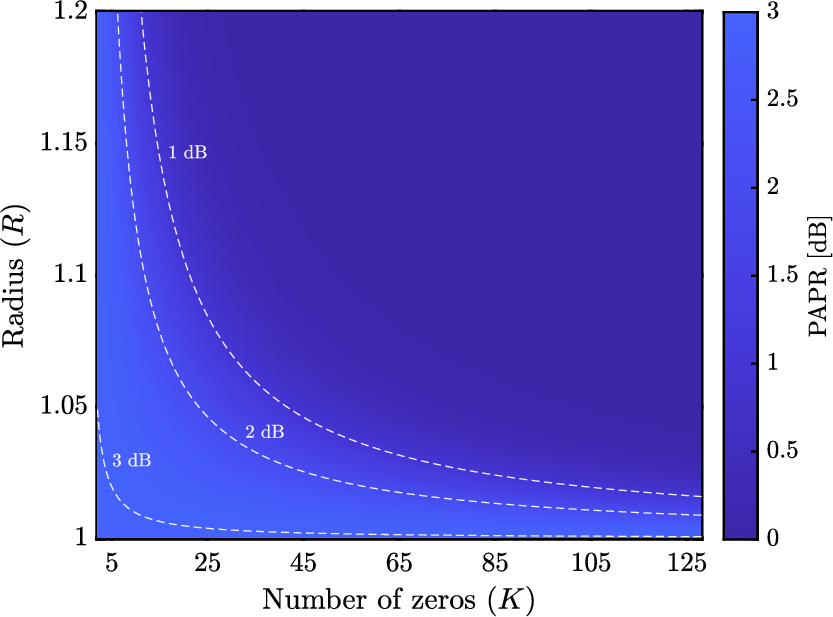}\label{subfig:papr_space_z1}}~\hspace{-7.25pt}
		\subfloat[$\asymFactor=1.05$ (\ac{JBMOCZ}).]{\includegraphics[width=2.35in]{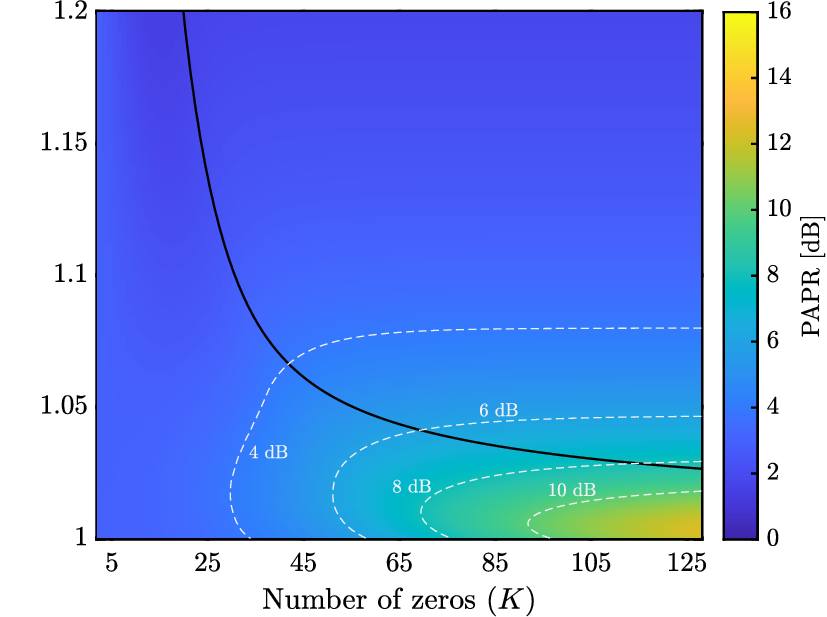}\label{subfig:papr_space_z2}}~\hspace{-7.25pt}
		\subfloat[$\asymFactor=1.10$ (\ac{JBMOCZ}).]{\includegraphics[width=2.35in]{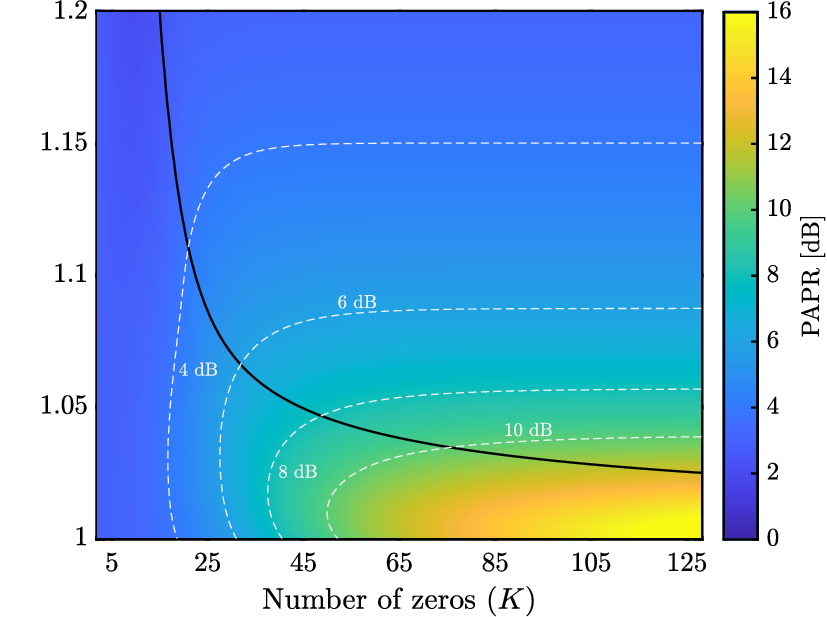}\label{subfig:papr_space_z3}}
		
		\caption{\ac{PAPR} for \ac{OFDM}-\ac{BMOCZ} with \ac{FM}. For a given triple $(\numZeros,\radius,\asymFactor)$, the \ac{PAPR} is fixed and independent of the message. Pairs $(\numZeros,\radius)$ above the black curves in (b) and (c) satisfy the condition in~\eqref{eq:sufficient_condition} for fixed $\asymFactor$.}	
		\label{fig:papr_space}
	\end{figure*}
	
	\subsection{Blind Channel Estimation with TM} \label{subsec:chest}
	
	To improve performance with \ac{FM} under frequency-selective fading, we consider the addition of a \ac{TM}-based preamble for blind channel estimation, placed just after the synchronization symbol in Fig.~\ref{fig:ofdm_diagram}. The preamble comprises $\numSubcarriers=\numZeros+1$ polynomials of degree $\numZerosTM\geq2$ and encodes $\numZerosTM(\numZeros+1)$~bits (uncoded). Let $\X_\subcarrierIndex(z)=\sum_{\ofdmSymbolIndex=0}^{\numZerosTM}x_{\ofdmSymbolIndex,\subcarrierIndex}z^\ofdmSymbolIndex$ be the polynomial transmitted over $\numZerosTM+1$ consecutive \ac{OFDM} symbols of the subcarrier $\subcarrierIndex\in[\numSubcarriers]$. At the receiver, after \ac{CP} removal and a \ac{DFT} operation, we observe the polynomial
	\begin{equation}
		\Y_\subcarrierIndex(z) = \sum_{\ofdmSymbolIndex=0}^{\numZerosTM} x_{\ofdmSymbolIndex,\subcarrierIndex}\subcarrierGain_{\subcarrierIndex,\ofdmSymbolIndex}z^\ofdmSymbolIndex + W(z) \approx \subcarrierGain_\subcarrierIndex\X_\subcarrierIndex(z) + W(z)\,,
	\end{equation}
	where the approximation follows from our quasi-static channel assumption and the prior \ac{CFO} compensation of Section~\ref{subsec:synch}.\footnote{We note that for \ac{TM}, it is the \ac{CFO} that induces zero rotation, not a \ac{TO}. In fact, a \ac{TO} has no effect on the zeros with \ac{TM}.} Decoding $\Y_\subcarrierIndex(z)$ via~\eqref{eq:hd_dizet} gives $\hat{\bMessage}_\subcarrierIndex\in\binaryField^{\numZerosTM}$, which we re-encode to obtain an estimate of the transmitted polynomial $\X_\subcarrierIndex(z)$ as $\Xhat_\subcarrierIndex(z)=\sum_{\ofdmSymbolIndex=0}^{\numZerosTM}\hat{x}_{\ofdmSymbolIndex,\subcarrierIndex}z^\ofdmSymbolIndex$. With $\Y_\subcarrierIndex(z)$ and $\Xhat_\subcarrierIndex(z)$, we compute the channel estimate $\channelEst_\subcarrierIndex\in\C$ by solving the least-squares problem $\channelEst_\subcarrierIndex=\arg\min_{\subcarrierGain_\subcarrierIndex\in\C}\sum_{\ofdmSymbolIndex=0}^{\numZerosTM}|y_{\ofdmSymbolIndex,\subcarrierIndex}-\hat{x}_{\ofdmSymbolIndex,\subcarrierIndex}\subcarrierGain_\subcarrierIndex|^2$, whose solution for $y_{\ofdmSymbolIndex,\subcarrierIndex} \triangleq x_{\ofdmSymbolIndex,\subcarrierIndex}\subcarrierGain_{\subcarrierIndex,\ofdmSymbolIndex}$ has the closed form
	\begin{equation} \label{eq:least_squares_chest}
		\channelEst_\subcarrierIndex = \frac{\sum_{\ofdmSymbolIndex=0}^{\numZerosTM}\overline{\hat{x}_{\ofdmSymbolIndex,\subcarrierIndex}}\,y_{\ofdmSymbolIndex,\subcarrierIndex}}{\sum_{\ofdmSymbolIndex=0}^{\numZerosTM}|\hat{x}_{\ofdmSymbolIndex,\subcarrierIndex}|^2}\,, \ \ \subcarrierIndex\in[\numSubcarriers]\,.
	\end{equation}
	Using the channel estimate $\channelEst_\subcarrierIndex$ and the noise variance $N_0$, estimated from the null subcarriers in~\eqref{eq:sync_symbols}, we synthesize the \ac{MMSE} equalizer
	\begin{equation}
		\equalizer_\subcarrierIndex = \frac{\overline{\channelEst_\subcarrierIndex}}{|\channelEst_\subcarrierIndex|^2+N_0/\E[|\hat{x}_{\ofdmSymbolIndex,\subcarrierIndex}|^2]}\,, \ \ \subcarrierIndex\in[\numSubcarriers]\,,
	\end{equation}
	where $\E[|\hat{x}_{\ofdmSymbolIndex,\subcarrierIndex}|^2] = ||\hat{\txSeq}_\subcarrierIndex||_2^2/(\numZerosTM+1) = 1$. Then, prior to decoding the \ac{FM}-based payload, we equalize the coefficients of $\Ymod_\ofdmSymbolIndex(z)$ in~\eqref{eq:zero_rotation} to obtain
	\begin{equation} \label{eq:equalized_poly}
		\hat{\Y}_\ofdmSymbolIndex(z) = \sum_{\subcarrierIndex=0}^{\numZeros}y_{\subcarrierIndex,\ofdmSymbolIndex}\e^{-j\po_\ofdmSymbolIndex\subcarrierIndex}\equalizer_\subcarrierIndex z^\subcarrierIndex \approx \sum_{\subcarrierIndex=0}^{\numZeros}x_{\subcarrierIndex,\ofdmSymbolIndex}z^\subcarrierIndex = \X_\ofdmSymbolIndex(z)\,.
	\end{equation}
	Note that the \ac{TO}-induced phase modulation in~\eqref{eq:equalized_poly} is absorbed by the channel estimates and hence is corrected following equalization. Thus, with a \ac{TM}-based preamble, the use of \ac{JBMOCZ} for residual \ac{TO} estimation is not strictly necessary. Additionally, the reliability of the channel estimates improves as $\numZerosTM$ increases, since the averaging in~\eqref{eq:least_squares_chest} becomes more robust to noise. This improvement, however, comes at the cost of a longer preamble and increased \ac{PAPR}, as the \ac{PAPR} of a Huffman sequence increases with its length~\cite{sasidharan2024alternative}.
	
	\subsection{Soft-decision DiZeT Decoding} \label{subsec:dizet}
	
	As noted in Section~\ref{subsec:problem}, our motivation for \ac{JBMOCZ} is to maintain performance under zero rotation without relying on a \ac{CPC}. This, in turn, increases flexibility and enables decoding using \emph{soft information}. Soft-decision decoding is ideal for \ac{JBMOCZ}, as the asymmetry inherently renders some zeros (and hence bits) more reliable than others (see Fig.~\ref{fig:stability_example}). To~obtain soft information from a received polynomial $\Y(z)$ with the coefficients $\rxSeq\in\C^{\totLength}$, we perform direct zero-testing and compute the \acp{PLLR}~\cite{walk2021multi}
	\begin{equation} \label{eq:LLRs}
		\pseudoLLR_\zeroIndex = \log \left( \frac{\e^{-\zeroMag_\zeroIndex^{-(\totLength-1)}|\Y(\zero_\zeroIndex\raisedOne)|^2}}{\e^{-\zeroMag_\zeroIndex^{\totLength-1}|\Y(\zero_\zeroIndex\raisedZero)|^2}} \right)\,, \ \ \zeroIndex\in[\numZeros]\,,
	\end{equation}
	where we first normalize the coefficient energy as $||\rxSeq||_2^2=1$. We call the LLRs in~\eqref{eq:LLRs} \emph{pseudo-LLRs}, since the distribution of the perturbed zeros is assumed as Gaussian, though the true distribution of the perturbed zeros is unknown. Lastly, we remark that in~\cite{ott2025binary} it is shown coding across multiple polynomials greatly improves the performance. While this approach extends naturally to \ac{FM} with coding performed over consecutive \ac{OFDM} symbols, it is not considered here for the purpose of analyzing the basic limitations of the scheme.
	
	\section{Numerical Results} \label{sec:results}
	
	This section presents the results of numerical simulations and an \ac{SDR} implementation. We first consider baseline simulations without \ac{OFDM} to compare the performance of \ac{JBMOCZ} to Huffman \ac{BMOCZ}. We then perform an \ac{OFDM} simulation under timing errors to assess the hybrid waveform and its extensions. Finally, we demonstrate \ac{OFDM}-\ac{BMOCZ} in practice using low-cost \acp{SDR}.
	
	\subsection{Comparison to Huffman BMOCZ} \label{subsec:comparison}
	
	In this section, we perform sequence-level simulations, as~in \cite{walk2019principles} and~\cite{walk2020practical}, to compare the performance of \ac{JBMOCZ} to Huffman \ac{BMOCZ}, i.e., the state-of-the-art \ac{BMOCZ} scheme. In our simulations, we uniformly sample messages $\bMessage\in\binaryField^\numBits$ and normalize $\txSeq\in\polyCodebook$ as $||\txSeq||_2^2=\numZeros+1$, where $\numZeros\geq\numBits$ represents the codeword length. Each transmitted codeword observes an independent \ac{CIR} realization $\channelIR\in\C^{\effLength}$, and the taps are modeled according to a uniform \ac{PDP}~\cite{walk2019principles,walk2020practical} as $\E[|\channelCoefficient_{\effIndex}|^2]=1/\effLength$, $\effIndex\in[\effLength]$. The channel output is then perturbed by \ac{AWGN} $\noiseSeq\in\C^{\totLength}$ with $\noiseElem_{\totIndex}\sim\complexnormal(0,N_0)$, $\totIndex\in[\totLength]$. Hence, the received coefficients $\rxSeq\in\C^{\totLength}$ are given by $\rxSeq=\txSeq\ast\channelIR+\noiseSeq$. We also consider the \ac{AWGN} channel as a baseline, where $\rxSeq=\txSeq+\noiseSeq\in\C^{\numZeros+1}$. In~simulations with zero rotation, we modulate $\rxSeq$ through an angle $\po$ sampled uniformly from $[0,2\pi)$. Of course, uniformly random zero rotation represents an exaggerated impairment, but it is treated for consistency with~\cite{walk2020practical}. We use the radius $\radius=\sqrt{1+\sin(\pi/\numZeros)}$ proposed in~\cite{walk2019principles} for Huffman \ac{BMOCZ}, while \ac{JBMOCZ} employs the radius $\radiusStar(\numZeros,\asymFactor)$ in~\eqref{eq:optimal_radius} with $\asymFactor>1$ selected for a template \ac{PAPR} of $8.5$~dB. All schemes utilize the hard-decision \ac{dizet} decoder in~\eqref{eq:hd_dizet}, except for polar-coded \ac{JBMOCZ}, which exploits soft information via the \acp{PLLR} in~\eqref{eq:LLRs} and uses a \ac{SIC} decoder. Similar to~\cite{walk2019principles}, we assume knowledge of $\effLength$ at the receiver to ensure proper scaling in~\eqref{eq:hd_dizet} and~\eqref{eq:LLRs}.
	
	\subsubsection{Uncoded Performance} \label{subsubsec:uncoded_performance}
	
	\begin{figure}[t!]
		\centering
		\includegraphics[width=3in]{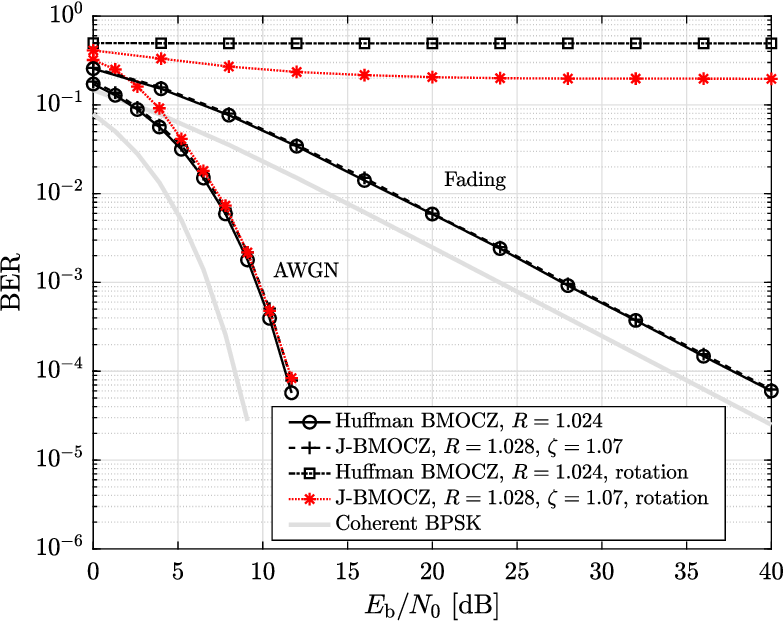}
		\caption{Comparison of uncoded \ac{JBMOCZ} to uncoded Huffman \ac{BMOCZ} for $\numZeros=64$ and $\effLength=5$. The curves marked with ``rotation" are impaired by random zero rotation.}
		\label{fig:error_no_coding}
	\end{figure}
	
	Fig.~\ref{fig:error_no_coding} compares the uncoded \ac{BER} performance of \ac{JBMOCZ} and Huffman \ac{BMOCZ} for $\numZeros=64$ and $\effLength=5$. Without any zero rotation, the \ac{BER} of the two schemes is nearly identical. \ac{JBMOCZ} performs $\leq0.3$~dB worse in both the \ac{AWGN} and fading channels. This result is expected, as with the constellation parameters in Fig.~\ref{fig:error_no_coding}, we compute $\meanCapacity(\polyCodebook_{\text{J}})\approx1.305<\meanCapacity(\polyCodebook_{\text{H}})\approx1.337$; the stability of the \ac{JBMOCZ} codebook is marginally less than that of Huffman \ac{BMOCZ}. More generally, we conjecture that the \ac{BER} of a \ac{BMOCZ} scheme is correlated with the stability in~\eqref{eq:mean_capacity}. Establishing such a relationship, however, requires theoretical \ac{BER} analysis with the \ac{dizet} decoder, which is challenging due to the non-linear relationship between the zeros and the coefficients. {\reviewColor Under zero rotation, notice that Huffman \ac{BMOCZ} experiences an immediate error floor, since without the \ac{ACPC}, integer rotations $\intCFO\baseAngle$ cannot be resolved (see Section~\ref{subsec:problem}).} In contrast, \ac{JBMOCZ} incurs a modest loss in performance under \ac{AWGN} that decreases with $\ebno$. Yet, \ac{JBMOCZ} also experiences an error floor in the selective channel, as the channel zeros distort the received template non-linearly and hinder the correlation in~\eqref{eq:template_corr}. Interestingly, we show next that coding largely resolves this issue.  
	
	\subsubsection{Coded Performance} \label{subsubsec:coded_performance}
	
	\begin{figure}[t!]
		\centering
		
		\subfloat[Coded \ac{BER} performance under zero rotation.]{\includegraphics[width=3in]{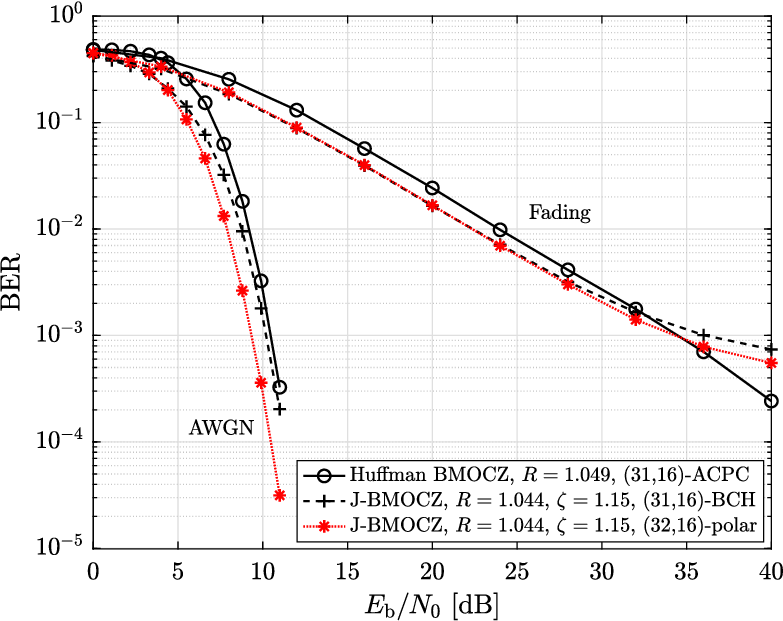}\label{subfig:ber_with_rotation}}\\
		\subfloat[Coded \ac{BLER} performance under zero rotation.]{\includegraphics[width=3in]{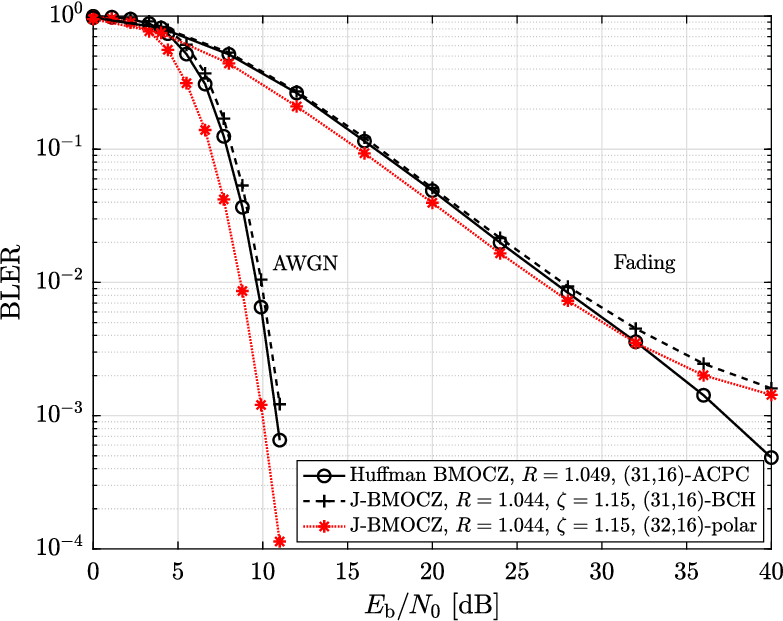}\label{subfig:bler_with_rotation}}
		
		\caption{Comparison of coded \ac{JBMOCZ} to coded Huffman \ac{BMOCZ} under random zero rotation for $\numZeros\in\{31,32\}$ and $\effLength=5$. All schemes encode $\numBits=16$~bits per polynomial.}	
		\label{fig:error_with_coding}
	\end{figure} 
	
	We compare the performance of coded \ac{JBMOCZ} to coded Huffman \ac{BMOCZ} with a message length of $\numBits=16$ and a codeword length of $\numZeros\in\{31,32\}$. Both schemes are impaired by uniformly random zero rotation. For Huffman \ac{BMOCZ}, to correct the fractional rotation $\fracCFO\baseAngle$, we adopt the oversampled \ac{dizet} decoder~\cite[Eq.~(34)]{walk2020practical} with an oversampling factor of $\overFactor=200$. Then, to correct the integer rotation $\intCFO\baseAngle$, we employ the (31,16)-\ac{ACPC}~\cite{walk2020practical}, which inherits a 2-bit error-correction capability from its outer (31,21)-BCH code. For \ac{JBMOCZ}, we consider both a (31,16)-BCH code and a (32,16)-polar code,\footnote{We consider the (31,16)-BCH code with hard-decision decoding for a ``fair" comparison to the (31,16)-ACPC, which also relies on hard decisions. However, the precise advantage of \ac{JBMOCZ} is that it enables soft decoding.} and we estimate $\po$ using the approach of Section~\ref{subsec:to_correction} as $\poEst=2\pi\idftIndexHat/\idftSize$, where $\idftSize=1024$ and $\idftIndexHat\in[\idftSize]$ is defined as in~\eqref{eq:corr_implementation} (see also \cite{huggins2025fourier}). 
	
	Fig.~\ref{fig:error_with_coding} shows the coded \ac{BER} and \ac{BLER} performance under uniformly random zero rotation with $\effLength=5$. In \ac{BER}, both \ac{JBMOCZ}-BCH and polar-coded \ac{JBMOCZ} outperform Huffman \ac{BMOCZ}-\ac{ACPC} at moderate-to-low $\ebno$, with polar-coded \ac{JBMOCZ} achieving gains of $1.5$~dB and $1$~dB in the \ac{AWGN} and fading channels, respectively. In \ac{BLER}, Huffman \ac{BMOCZ}-\ac{ACPC} just outperforms \ac{JBMOCZ}-BCH, but polar-coded \ac{JBMOCZ} again achieves a $1$~dB gain relative to Huffman \ac{BMOCZ}-\ac{ACPC}. {\reviewColor This result highlights the efficacy of soft decoding for \ac{JBMOCZ}.} Note, however, that \ac{JBMOCZ} still experiences an error floor at high $\ebno$ due to the selectivity of the channel. Hence, if the channel is especially selective (e.g., if $\effLength>\numZeros$), then low code rates are required.  
	
	\subsubsection{Estimation Performance} \label{subsubsec:est_performance}
	
	\begin{figure}[t!]
		\centering
		\includegraphics[width=3in]{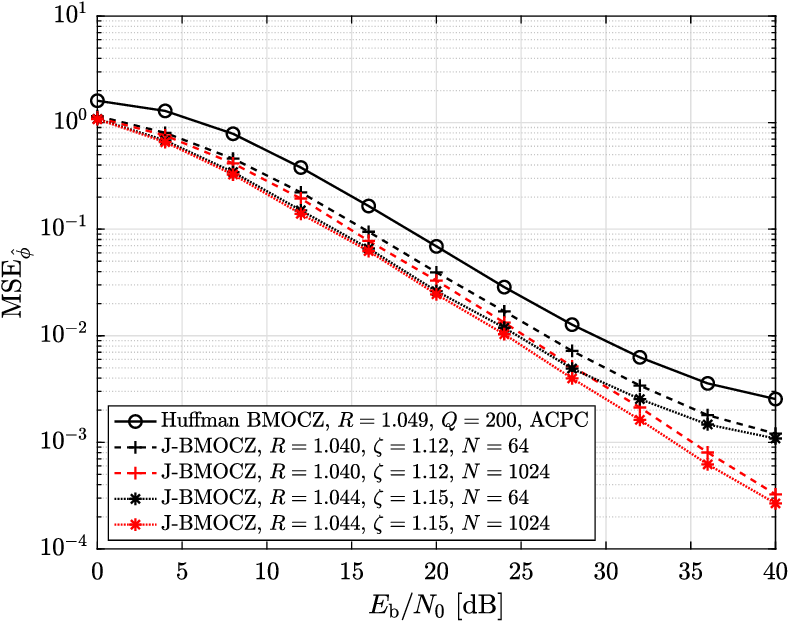}
		\caption{\ac{MSE} of zero rotation estimators for \ac{JBMOCZ} and Huffman \ac{BMOCZ} with $\numZeros=31$ and $\effLength=1$.}
		\label{fig:rotation_mse}
	\end{figure}
	
	We compare the \ac{MSE} of the \ac{JBMOCZ} and Huffman \ac{BMOCZ}-\ac{ACPC} zero rotation estimators for $\numZeros=31$ and $\effLength=1$ (Rayleigh fading). The \ac{MSE} is computed experimentally by averaging over $\numPolys$ estimates as 
	\begin{equation}
		\MSE_{\poEst} = \frac{1}{\numPolys} \sum_{\polyIndex=0}^{\numPolys-1} \min \left\{ [\po_\polyIndex-\poEst_\polyIndex]^2, [\po_\polyIndex-(2\pi-\poEst_\polyIndex)]^2 \right\}\,,
	\end{equation}
	where $\po_\polyIndex$ and $\poEst_\polyIndex$ denote, respectively, the true and estimated zero rotation for the $\polyIndex$th polynomial. For Huffman \ac{BMOCZ}, we utilize the oversampled \ac{dizet} decoder with $\overFactor=200$ and the (31,16)-\ac{ACPC} to obtain $\hat{\fracCFO}$ and $\hat{\intCFO}$, respectively, which combine to yield the composite estimate $\poEst=(\hat{\intCFO}+\hat{\fracCFO})\baseAngle$. For~\ac{JBMOCZ}, we compute $\poEst$ as described in Section~\ref{subsubsec:coded_performance}, and here we consider $\asymFactor\in\{1.12,1.15\}$ and $\idftSize\in\{64,1024\}$. Fig.~\ref{fig:rotation_mse} plots $\MSE_{\poEst}$ against $\ebno$, where the \ac{JBMOCZ} estimators outperform that of Huffman \ac{BMOCZ}-\ac{ACPC} by a margin $\geq2.6$~dB. Notice that increasing $\idftSize$ only marginally improves the \ac{MSE} for \ac{JBMOCZ}, especially at low $\ebno$.
	
	\subsection{OFDM Simulation with Residual TO} \label{subsec:ofdm_sim}
	
	\begin{figure}[t!]
		\centering
		
		\subfloat[\ac{BER} performance under residual \ac{TO}.]{\includegraphics[width=3in]{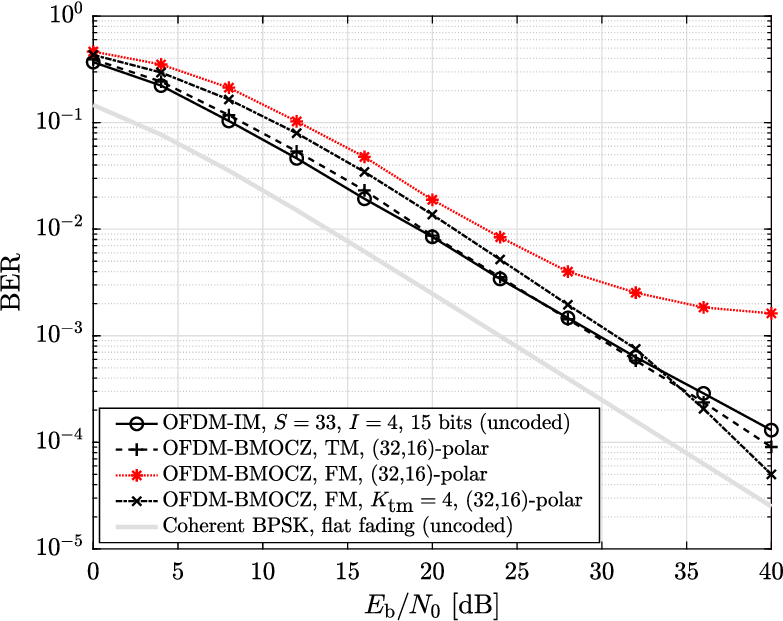}\label{subfig:ber_ofdm}}\\
		\subfloat[\ac{BLER} performance under residual \ac{TO}.]{\includegraphics[width=3in]{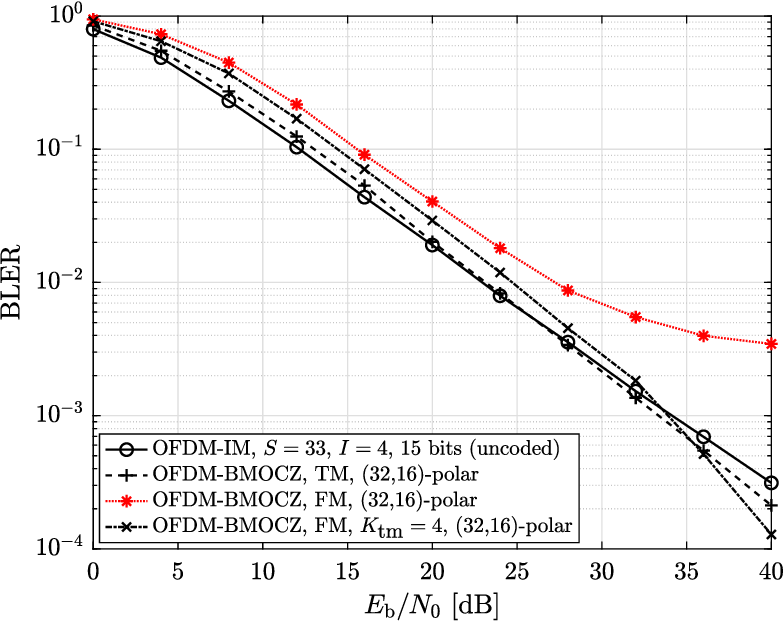}\label{subfig:bler_ofdm}}
		
		\caption{Comparison of \ac{OFDM} schemes in 802.11n Channel-B. \ac{FM} uses the hybrid packet structure of Fig.~\ref{fig:ofdm_diagram}. All schemes are impaired by a residual \ac{TO} of $\stepBackSamples\in[6]$~samples.}	
		\label{fig:error_curves_ofdm}
	\end{figure}
	
	In this section, we perform \ac{OFDM}-based simulations in 802.11n Channel-B~\cite{erceg2004tgn}, which has a delay spread of $80$~ns and models multi-path propagation in an indoor, residential~area. For \ac{OFDM}-\ac{BMOCZ}, we set $\numPayloadBits=512$, $\numZeros=32$, $\numBits=16$, and we compare three different schemes: \ac{FM}, \ac{FM} with blind channel estimation, and \ac{TM}. \ac{FM} uses the hybrid packet of Fig.~\ref{fig:ofdm_diagram} (minus the synchronization preamble) with $\numSubcarriers=\numZeros+1$ subcarriers and $\numOFDMsymbols=\lceil\numPayloadBits/\numBits\rceil=32$ \ac{OFDM} symbols. For blind channel estimation, we select $\numZerosTM=4$ and prepend a \ac{TM}-based, Huffman \ac{BMOCZ} preamble of $\numZeros+1$ subcarriers and $\numZerosTM+1$ \ac{OFDM} symbols to the \ac{FM} waveform. \ac{TM} is also based on Huffman \ac{BMOCZ} with $\numSubcarriers=\lceil\numPayloadBits/\numBits\rceil=32$ subcarriers and $\numOFDMsymbols=\numZeros+1$ \ac{OFDM} symbols. We use the conventional radius $\radius=\sqrt{1+\sin(\pi/\numZeros)}=1.048$~\cite{walk2019principles} for Huffman \ac{BMOCZ} (similarly, $\radius_{\text{tm}}=1.307$), while \ac{JBMOCZ} uses $\asymFactor=1.15$ and $\radiusStar(\numZeros,\asymFactor)=1.044$. A (32,16)-polar code is employed for all \ac{OFDM}-\ac{BMOCZ} schemes. As a baseline, we~simulate uncoded, non-coherent \ac{OFDM}-\ac{IM} with $\numZeros+1$ subcarriers, $\imActiveCarriers=4$ of which are active; this enables the encoding of $\lfloor\log_2\binom{\numZeros+1}{\imActiveCarriers}\rfloor=15\approx\numBits$~bits per \ac{OFDM} symbol. We fix $\sampleRate=10$~MHz, $\idftSize=256$, and $\cpSize=8$, which yields an approximate bandwidth of $1.3$~MHz for all schemes. To~simulate a residual \ac{TO}, we assume perfect synchronization and then randomly back off $\stepBackSamples\in[6]$~samples. For \ac{FM}, both with and without blind channel estimation, we correct the residual \ac{TO} using~\eqref{eq:corr_implementation}. For \ac{TM} and \ac{OFDM}-\ac{IM}, however, a timing error has no effect and so no correction is considered. A summary of the simulated schemes is provided in Table~\ref{tab:sim_summary}.
	
	Fig.~\ref{fig:error_curves_ofdm} shows the simulated \ac{BER} and \ac{BLER} performance against $\ebno$. Measured at a \ac{BER} of $10^{-1}$, \ac{FM} performs $3$~dB and $4$~dB worse than \ac{TM} and \ac{OFDM}-\ac{IM}, respectively. Adding the \ac{TM}-based preamble for blind channel estimation reduces the gap to $2$~dB and $3$~dB, respectively. In fact, \ac{FM} with blind channel estimation outperforms all schemes at high $\ebno$ due to the reliability of the channel estimates in the absence of noise. This result suggests that further averaging at low $\ebno$ to denoise the channel estimates could significantly improve the performance. As noted in Section~\ref{subsec:chest}, however, averaging requires a longer preamble, which is undesirable for small payloads. Still, we remark that other techniques can improve the performance with \ac{FM}. Multiple receive antennas, for example, would facilitate frequency diversity and enhance both detection~\cite{walk2020practical,sun2023non} and the reliability of the blind channel estimates. Time-frequency coding represents another natural extension~\cite{liu2002space}. 
	
	\begin{table}[t!]
		\centering 
		\caption{Summary of simulated OFDM schemes}
		\label{tab:sim_summary}
		\begin{tabular}{lcccc}
			\toprule
			& FM & FM + CHEST & TM & IM\\
			\midrule
			Subcarriers & 33 & 33 & 32 & 33\\
			OFDM symbols & 32 & 38 & 33 & 34\\
			Spectral eff. [bits/s/Hz] & 0.47 & 0.49 & 0.46 & 0.44\\
			\bottomrule
		\end{tabular}
	\end{table}
	
	\begin{figure*}[t!]
		\centering
		
		\subfloat[Experimental setup.]{\includegraphics[width=2in]{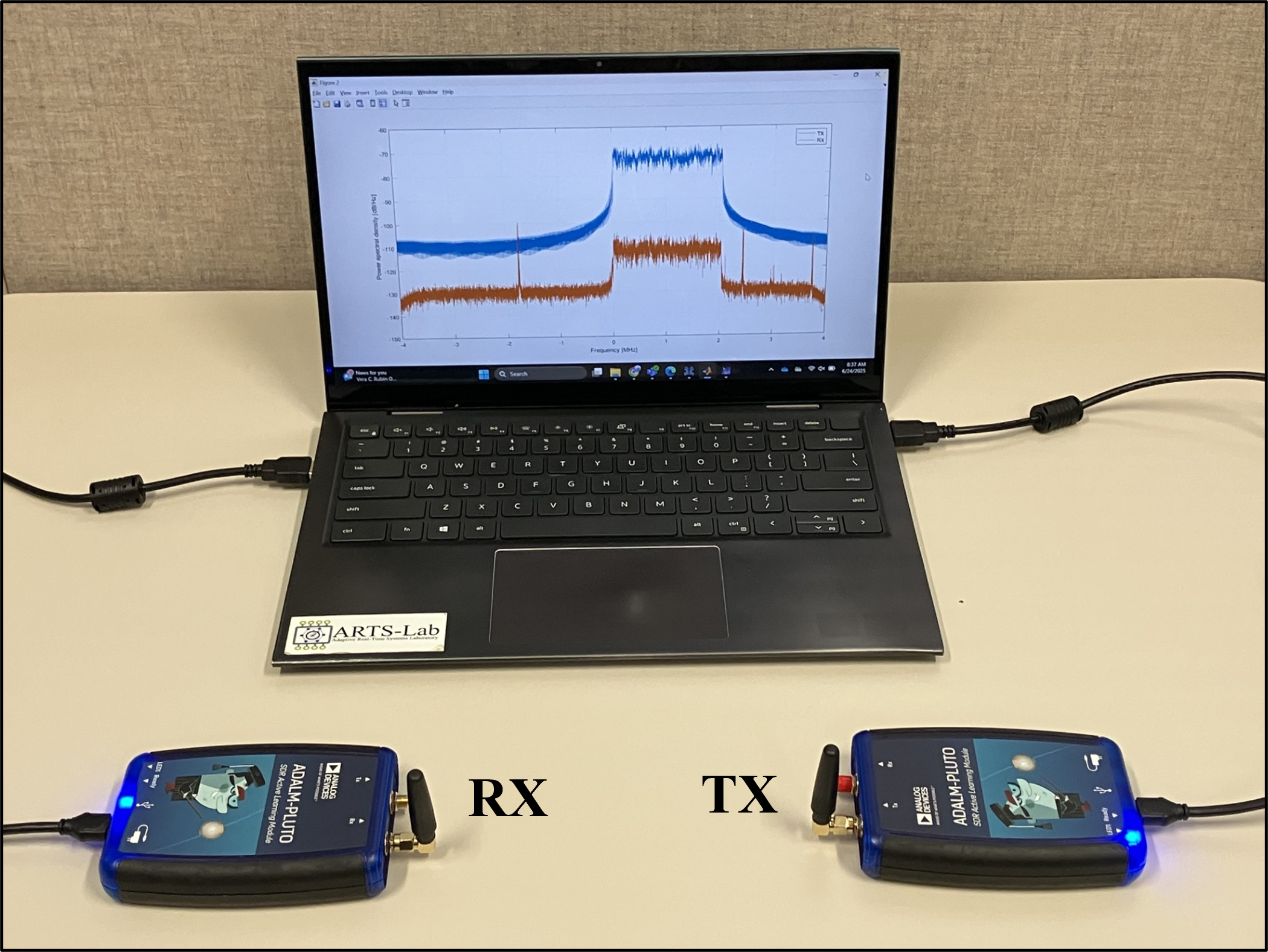}\label{subfig:demo_setup}}~\hspace{0.5cm}
		\subfloat[I \& Q components of the TX signal.]{\includegraphics[width=2in]{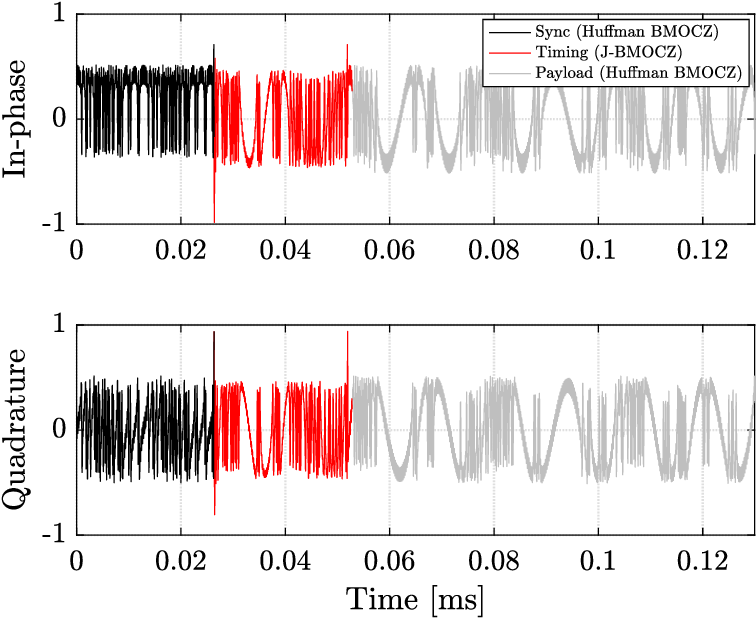}\label{subfig:tx_data}}~\hspace{0.5cm}
		\subfloat[PSD of the TX and RX signals.]{\includegraphics[width=2in]{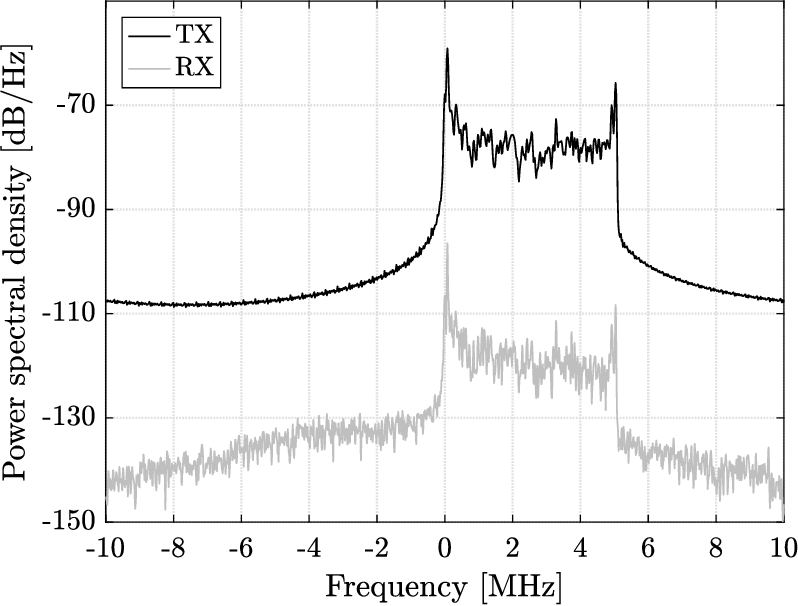}\label{subfig:psd}}
		
		\subfloat[I \& Q components of the RX signal.]{\includegraphics[width=2.1in]{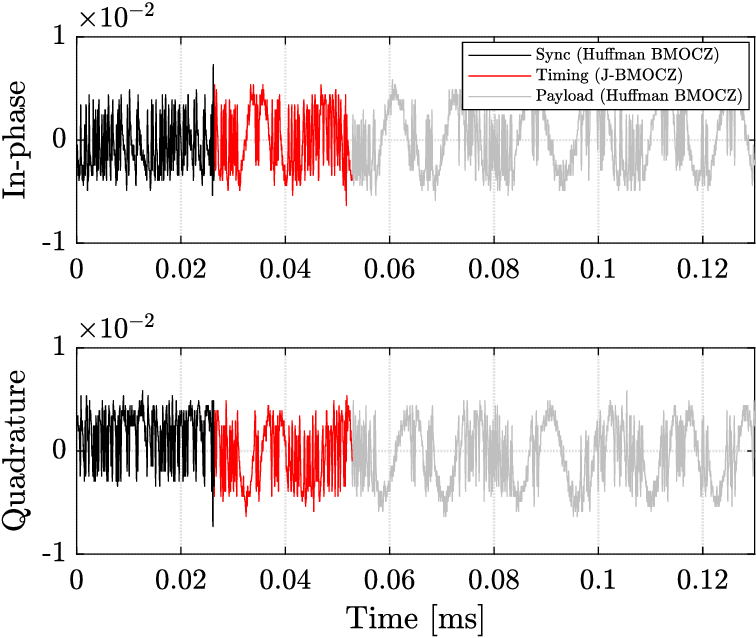}\label{subfig:rx_data}}~\hspace{0.5cm}
		\subfloat[Normalized template transforms.]{\includegraphics[width=2in]{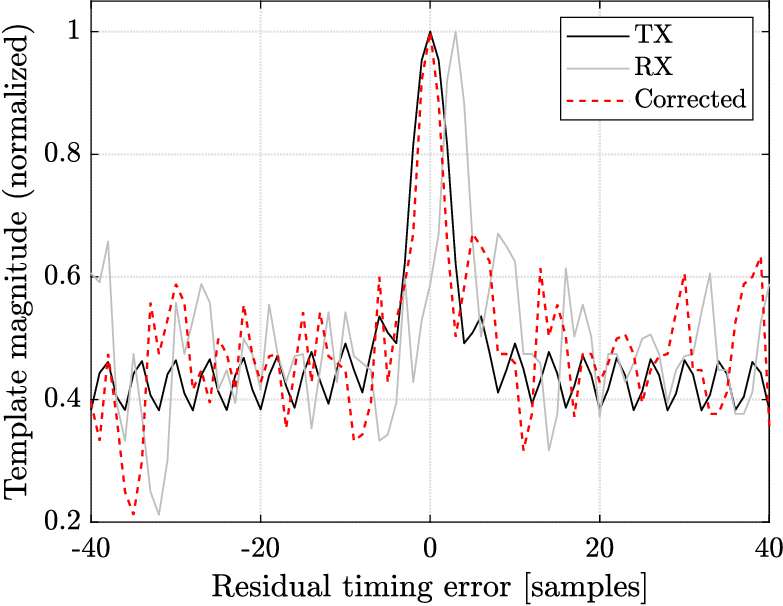}\label{subfig:tx_rx_templates}}~\hspace{0.6cm}
		\subfloat[Zeros of the \ac{JBMOCZ} polynomial.]{\includegraphics[width=1.8in]{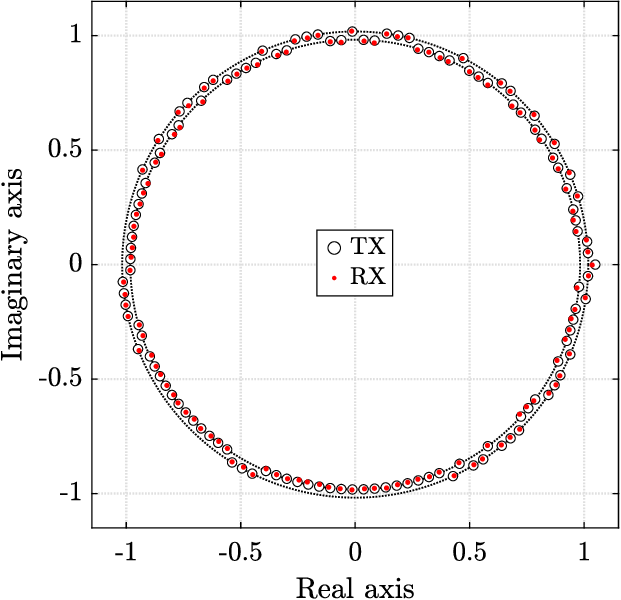}\label{subfig:timing_zeros}}

		\caption{\ac{SDR} demonstration of the proposed hybrid waveform in Fig.~\ref{fig:ofdm_diagram}. (a) Experimental setup with two ADALM-PLUTO \acp{SDR}. (b) TX signal at baseband. (c) PSD of the TX and RX signals. (d) RX signal at baseband (after cropping). (e) Normalized template transforms before and after residual \ac{TO} correction. (f) Zeros of the \ac{JBMOCZ} polynomial following residual \ac{TO} correction ($\numZeros=127$).}	
		\label{fig:sdr_results}
	\end{figure*}
	
	\subsection{SDR Implementation} \label{subsec:implementation}
	
	In this section, we transmit and receive $\numPayloadBits=424$~bits using \ac{OFDM}-\ac{BMOCZ} and off-the-shelf, ADALM-PLUTO \acp{SDR}, which are equipped with an AD9363 RF front end~\cite{adalmpluto}.\footnote{A version of this demo is available on GitHub, along with other example codes related to this work: https://github.com/parkerkh/Jutted-BMOCZ.} Note that the payload size $\numPayloadBits$ is too large to transmit in one shot (i.e., using a single polynomial) with \ac{BMOCZ}. Hence, we synthesize a multi-polynomial packet with \ac{FM} following the hybrid structure of Section~\ref{subsec:hybrid}, where we~set $\numBits=106$, $\numZeros=127$, $\Kprime=\lfloor\numZeros/2\rfloor=63$, and we encode each block of $\numBits$~bits using a (127,106)-BCH code. The payload thus comprises $\numSubcarriers=\numZeros+1=128$ active subcarriers and $\numOFDMsymbols=\lceil\numPayloadBits/\numBits\rceil=4$ \ac{OFDM} symbols. We use the radii $\radius_\text{sync}=1.025$, $\radius_\mathrm{J}=1.018$, and $\radius_\mathrm{H}=1.012$ for the synchronization preamble, \ac{JBMOCZ} symbol ($\asymFactor=1.03$), and Huffman \ac{BMOCZ} payload, respectively. Since $(\numZeros,\radius_\mathrm{J},\asymFactor)$ does not satisfy~\eqref{eq:sufficient_condition}, we obtain the template \ac{PAPR} numerically as $7.27$~dB, while the \ac{PAPR} of the preamble and payload are computed via~\eqref{eq:huffman_papr} as $1.50$~dB and $1.48$~dB, respectively. The sampling rate is fixed to $\sampleRate=20$~MHz, and we set $\idftSize=512$ and $\cpSize=8$ so that the symbol and \ac{CP} times~are $\symbolDuration=25.6$~$\mu$s and $\cpDuration=0.4$~$\mu$s. The packet duration is $\packetDuration=(\symbolDuration+\cpDuration)(\numOFDMsymbols+1)=0.13$~ms, and the bandwidth is $W=(\numSubcarriers+1)/\symbolDuration\approx5$~MHz. We utilize a carrier at $910$~MHz in the ISM band, and the transmit and receive gains set to $-5$~dB and $15$~dB, respectively. In our implementation, the receiver knows the basic system parameters (e.g., $\numBits$, $\numZeros$, $\idftSize$) but \emph{not} the \ac{CSI}.
	
	Fig.~\ref{fig:sdr_results} summarizes the results of our experiment, where we \ac{TX} and \ac{RX} the $424$-bit payload. Fig.~\ref{fig:sdr_results}(a) shows the \ac{SDR} setup, while Fig.~\ref{fig:sdr_results}(b) plots the \ac{I} and \ac{Q} components of the \ac{TX} signal. The synchronization preamble, \ac{JBMOCZ} timing symbol, and Huffman \ac{BMOCZ} payload have been annotated for reference. Fig.~\ref{fig:sdr_results}(c) shows the \ac{PSD} of both the \ac{TX} and \ac{RX} signals at baseband. The spikes in the \ac{PSD}, located at the edges of the occupied bandwidth, are due to the fact that the first and last coefficients for Huffman \ac{BMOCZ} (placed on the first and last subcarriers) carry nearly half of the sequence energy~\cite{walk2019principles,walk2020practical}. Fig.~\ref{fig:sdr_results}(d) plots the \ac{I} and \ac{Q} components of the \ac{RX} signal after coarse synchronization, \ac{CFO} compensation, and cropping. Fig.~\ref{fig:sdr_results}(e) shows the sampled templates in the time domain before and after correcting the residual \ac{TO}. Notice that the \ac{RX} template is shifted by $3$~samples, which represents the timing error following coarse synchronization with $\percentage=0.99$. Finally, Fig.~\ref{fig:sdr_results}(f) plots the \ac{TX} and \ac{RX} zeros of the \ac{JBMOCZ} polynomial. The \ac{RX} zeros overlap with the \ac{TX} zeros due to the correctly identified residual \ac{TO} and the high \ac{SNR}, which we estimate as $18.9$~dB. Indeed, with the \ac{dizet} decoder in~\eqref{eq:hd_dizet}, we recovered error-free both the header data of $\Kprime=63$~bits and the payload of $\numPayloadBits=424$~bits, achieving a data rate of $D=(\Kprime+\numPayloadBits)/\packetDuration=3.75$~Mbps and a throughput of $D/W=0.743$~bits/s/Hz.
	
	\section{Conclusion} \label{sec:conclusion}
	
	In this work, we propose \ac{JBMOCZ}, a generalization of Huffman \ac{BMOCZ} to include an asymmetry parameter. Fundamentally, our motivation for introducing asymmetry to the Huffman \ac{BMOCZ} zero constellation is to remove the ambiguity associated with zero rotation and thereby enable \ac{CPC}-free decoding. The proposed method, however, leads to a trade-off between asymmetry and zero stability. Accordingly, we introduce a reliability metric to measure zero stability and apply it to optimize the \ac{JBMOCZ} zero constellation parameters. Then, building on this analysis, we propose an \ac{OFDM}-\ac{BMOCZ} framework for the flexible transmission of multi-polynomial packets. We derive the \ac{PAPR} in closed form, and we describe a technique for blind channel estimation to improve the performance under frequency-selective fading.
	
	The contributions of this work lead to several natural research directions. For example, beyond its relevance to \ac{OFDM}-\ac{BMOCZ}, it is mathematically interesting to analyze how the roots of a polynomial behave when its coefficients are perturbed multiplicatively. Similarly, one can study the roots under $N$-point circular convolution of the coefficients, which is equivalent to pointwise multiplication after a \ac{DFT} and corresponds to polynomial multiplication modulo $z^N-1$. In addition to these considerations, the reliability metric of Section~\ref{subsec:stability} also has interesting applications. For instance, as Remark~\ref{remark:optimization} suggests, one can apply it to study the ``optimal" \ac{MOCZ} design with respect to zero stability, or to develop a reliability sequence for polar-coded \ac{MOCZ}. Finally, a practical research direction is to study extensions not considered in the system model of this work, such as Doppler, multiple users, and multiple antennas.
	
	\appendices
	
	\section{Proof of Proposition~\ref{prop:max_condition}} \label{app:prop_proof}
	
	\begin{proof}
		Suppose $\Ajutted(\e^{j\omega})$, given in~\eqref{eq:jutted_psd}, has a global maximum on the interval $[0,2\pi)$. We seek a sufficient condition such that $\argmax_{\omega\in[0,2\pi)}=0$. Since $\Ajutted(\e^{j\omega})$ is even with period $2\pi$, it suffices to consider the smaller interval $[0,\pi)\subset[0,2\pi)$. Let us ignore the scaling $(\numZeros+1)(\etaJ/\etaH)$ in~\eqref{eq:jutted_psd} and define $F(\omega) \triangleq E(\omega)S(\omega)$, where 
		\begin{equation}
			E(\omega) = \frac{2\cos(\omega)-a}{2\cos(\omega)-b} \ \ \text{and} \ \ S(\omega) = 1 - 2\etaH\cos(\numZeros\omega)\,,
		\end{equation}
		and $a,b\in\R$ are defined as in Proposition~\ref{prop:max_condition}. Now compute
		\begin{equation}
			E'(\omega) = -\frac{2(a-b)\sin(\omega)}{(2\cos(\omega)-b)^2} \ \ \text{and} \ \ S'(\omega) = 2\etaH\numZeros\sin(\numZeros\omega)\,.
		\end{equation}
		Since $\cos(\numZeros\omega)\geq-1$, we have $S(\omega)\leq1+2\etaH=S(\pi/\numZeros)$. Similarly, $E'(\omega)<0$ on $\omega\in(0,\pi)$, which implies that $E(\omega)\leq E(\pi/\numZeros)$ for all $\omega\in[\pi/\numZeros,\pi)$. Then, combining we obtain $F(\omega)\leq F(\pi/\numZeros)$ on $\omega\in[\pi/\numZeros,\pi)$, and hence $\argmax_{\omega\in[0,\pi)}F(\omega)\notin(\pi/\numZeros,\pi)$. Now note that because $F(\omega)$ is even, it necessarily has a local extremum at $\omega=0$. Hence, a sufficient condition for $\argmax_{\omega\in[0,\pi/\numZeros]}F(\omega)=0$ is $F'(\omega) = E'(\omega)S(\omega) + E(\omega)S'(\omega)<0$ on $\omega\in(0,\pi/\numZeros)$, which holds if and only if
		\begin{equation} \label{eq:fg_inequality}
			\underbrace{\frac{2\etaH\numZeros\sin(\numZeros\omega)}{1-2\etaH\cos(\numZeros\omega)}}_{\triangleq\,f(\omega)} < \underbrace{\frac{2(a-b)\sin(\omega)}{[a-2\cos(\omega)][b-2\cos(\omega)]}}_{\triangleq\,g(\omega)}\,.
		\end{equation}
		Recall that $2<b<a$. Thus, for $\omega\in(0,\pi/\numZeros)$, $\sin(\numZeros\omega)>0$ ensures $f(\omega)>0$. Similarly, for $\omega\in(0,\pi/\numZeros)$, $\sin(\omega)>0$ implies $g(\omega)>0$. We proceed by bounding $f(\omega)$ and $g(\omega)$ from above and below, respectively.
		
		\emph{Upper bound on $f$:} Since $\cos(\numZeros\omega)<1$ on $\omega\in(0,\pi/\numZeros)$, it follows that $1-2\etaH\cos(\numZeros\omega)>1-2\etaH$. Moreover, $\sin(\numZeros\omega)<\numZeros\sin(\omega)$ for all $\omega\in(0,\pi/\numZeros)$, which yields 
		\begin{equation}
			f(\omega) < \frac{2\etaH\numZeros^2\sin(\omega)}{1-2\etaH}\,.
		\end{equation}
		
		\emph{Lower bound on $g$:} Since $a-2\cos(\omega) < a-2\cos(\pi/\numZeros)$ and $b-2\cos(\omega) < b-2\cos(\pi/\numZeros)$ on $\omega\in(0,\pi/\numZeros)$, we have
		\begin{equation}
			g(\omega) > \frac{2(a-b)\sin(\omega)}{[a-2\cos(\pi/\numZeros)][b-2\cos(\pi/\numZeros)]}\,.
		\end{equation}
		Therefore,~\eqref{eq:fg_inequality} holds for all $\omega\in(0,\pi/\numZeros)$ if 
		\begin{equation}
			\frac{2\etaH\numZeros^2\sin(\omega)}{1-2\etaH} < \frac{2(a-b)\sin(\omega)}{[a-2\cos(\pi/\numZeros)][b-2\cos(\pi/\numZeros)]}\,,
		\end{equation}
		and rearranging we obtain the inequality in~\eqref{eq:sufficient_condition}.
	\end{proof}
	
	\bibliographystyle{IEEEtran} 
	\bibliography{references}
	
\end{document}